\def\BibTeX{{\rm B\kern-.05em{\sc i\kern-.025em b}\kern-.08em
		T\kern-.1667em\lower.7ex\hbox{E}\kern-.125emX}}
\DeclareMathOperator{\arctantwo}{arctan2}
\begin{document}
	
		\title{Robust Target Localization in 2D: A Value-at-Risk Approach}
	\author{Jo\~ao Domingos and Jo\~ao Xavier
	\thanks{\hspace{-.61cm} Jo\~ao Domingos and Jo\~ao Xavier {\tt\small oliveira.domingos@ist.utl.pt, jxavier@isr.tecnico.ulisboa.pt.}
		are with the Instituto Sistemas e Rob\'otica  and Instituto Superior T\'ecnico in Portugal. This work has been supported by FCT, Fundaç\~ao para a Ci\^encia e a Tecnologia,
		under the projects PD/BD/150631/2020 and LARSyS - FCT Project UIDB/50009/2020.}
}
%
\newtheorem{theorem}{Theorem}
\newtheorem{lemma}{Lemma}
\newtheorem{remark}{Remark}
\newtheorem{example}{Example}
\newtheorem{result}{Result}
\newtheorem{definition}{Definition}
\newtheorem{corollary}{Corollary}[theorem]
\maketitle

\begin{abstract}
	This paper consider considers the problem of locating a two dimensional target from range-measurements containing outliers. Assuming that the number of outlier is known, we formulate the problem of minimizing inlier losses while ignoring outliers. This leads to a combinatorial, non-convex, non-smooth problem involving the percentile function. Using the framework of risk analysis from Rockafellar et al., we start by interpreting this formulation as a Value-at-risk (VaR) problem from portfolio optimization. To the best of our knowledge, this is the first time that a localization problem was formulated using risk analysis theory. To study the VaR formulation, we start by designing a majorizer set that contains any solution  of a general percentile problem. This set is useful because, when applied to a localization scenario in 2D, it allows to majorize the solution set in terms of singletons, circumferences, ellipses and hyperbolas. Using know parametrization of these curves, we propose a grid method for the original non-convex problem. So we reduce the task of optimizing the VaR objective  to that of efficiently sampling the proposed majorizer set.  We compare our algorithm with four benchmarks in target localization. Numerical simulations show that our method is fast while, on average, improving the accuracy of the best benchmarks by at least $100$m in a $1$ Km$^2$ area.
\end{abstract}

\begin{IEEEkeywords}
	Target Localization, Robust Estimation,  Risk Measure, Value-at-Risk, First Order Optimality Conditions, Percentile Optimization, Plane geometry, Conic sections.
\end{IEEEkeywords}
	\maketitle
	
\section{Introduction}
\label{sec:introduction}
This paper considers the problem of robust target estimation, that is, estimating the position of a target $x\in \mathbf{R}^2$ from range measurements, some of which may be outliers. In concrete, we consider the additive model
\begin{align}
y_m=\left\|x-a_m\right\|+u_m, \enspace m=1,\dots,M
\label{eqn:data_model}
\end{align} 
with measurements $y_m\in \mathbf{R}$, anchors $a_m \in \mathbf{R}^2$ and additive uncertainties $u_m\in \mathbf{R}$. Several papers consider estimation problems where the uncertainty vector $u:=(u_1,\dots,u_m)$ lies in a known bounded region  which reflects the support of inlier error distributions~\cite{eldar2008minimax},~\cite{shi2016robust},~\cite{domingos2022robust}. Let us note, however, that in practice model~\eqref{eqn:data_model} can be affected by outlier uncertainties capable of biasing non-robust estimates by alarming amounts (see the numerics of section~\ref{sec:numerical_results}). In this paper we partition the set of $M$ measurements into inliers $\mathcal{I}$ and outliers $\mathcal{O}$, i.e.,
\begin{align}
\{1,\dots,M\}=\mathcal{I} \cup \mathcal{O},\enspace \mathcal{I} \cap \mathcal{O}=\emptyset.
\end{align}
In broad terms, measurement $y_m$ is an outlier if it deviates largely from the assumed data model $\left\|x-a_m\right\|$, that is, the difference $ y_{m} -\left\|x-a_m\right\| $ is large in absolute value. In this paper we assume that the number of outliers measurements $L\in\{0,\dots,M-1\}$ is known.
\textcolor{black}{ Let $\{\pi_0,\dots,\pi_{M-1}\}$ denote the permutation of $\{1,\dots,M\}$} that orders the absolute deviations $| y_{m} -\left\|x-a_m\right\| |$ in descending order, that is,
\textcolor{black}{ \begin{align}
| y_{\pi_0} -\left\|x-a_{\pi_0}\right\| | \geq \dots \geq | y_{\pi_{M-1}} -\left\|x-a_{\pi_{M-1}}\right\| |.
\end{align}}
Given $L$, the outlier set $\mathcal{O}$ corresponds to the set of measurements which have the $L$ largest absolute deviations so \textcolor{black}{ $\mathcal{O}=\{\pi_0,\dots,\pi_{L}\}$} with $\mathcal{O}=\emptyset$ if $L=0$. Note that, in general, the permutation \textcolor{black}{ $\{\pi_0,\dots,\pi_{M-1}\}$ }depends on target $x$ and, hence, is unknown. For simplicity of notation we ignore the dependency \textcolor{black}{ of indices  $\{\pi_0,\dots,\pi_{M-1}\}$ } on $x$. From now on we assume that only the number of outliers $L$ is know;  \textcolor{black}{not the indices $\{\pi_0,\dots,\pi_{M-1}\}$ of outlier measurements.}
\subsection{Problem Formulation}
We consider the problem of estimating the target position $x$ by knowing, in advance, that $L$ measurements deviate from model~\eqref{eqn:data_model}. In concrete, we want an estimate $\hat{x}$ such that the model deviations $\{| y_{m} -\left\|\hat{x}-a_m\right\| |\}_{m=1}^M$ are as low as possible for the inlier measurements $m\in \mathcal{I}$ while discarding/ignoring outlier measurements $m\in \mathcal{O}$. We compute $\hat{x}$ by minimizing the percentile objective
\begin{align}
\hat{x} \in \arg \min_x  p_{L} \{ | y_{{m}} -\left\|x-a_{m}\right\| | \},
\label{eqn:percentile_min}
\end{align}
with $p_{L}:\mathbf{R}^M\mapsto \mathbf{R}$ the $L$-th percentile function. The percentile function $p_{L}(z)$ returns the largest element of $z\in \mathbf{R}^M$ after discarding its $L$ largest entries. So, $p_{0}(z)=\max(z_i)$ and $p_{M-1}(z)=\min(z_i)$. In compact notation,  \textcolor{black}{$p_{L}(z)=z_{p_{L}}$ with $\{\pi_0,\dots\pi_{M-1}\}$ the permutation } of $\{1,\dots,M\}$ that orders the elements of $z$ in descending order, that is
\textcolor{black}{ \begin{align}
z_{\pi_0}\geq \dots \geq z_{\pi_{L}} = p_L(z)\geq z_{\pi_{L+1}} \geq \dots \geq z_{\pi_{M-1}}.
\label{eqn:sort_elements}
\end{align}}
In general, problem~\eqref{eqn:percentile_min} is difficult to solve because both the percentile function $p_L$ and the deviation mapping $x\mapsto | y_{m} -\left\|x-a_{m}\right\| |$ are non-convex and non-differentiable.
\subsection{Literature Review}
\label{sec:related-wk}
\paragraph{Target Localization}
Most literature on target localization considers a least squares (LS) approach that aims to minimize the sum of residuals between the measurement model and fixed range measurements~\cite{Teboulle_ADMM_2017,Majorization_minimization_2019,Target_Tracking_by_GD_2021,Source_chapter_2018,Ite_Source_Localization_2007,Approximate_ML_in_2D,Taylor_series_algorithm_2019,Optimal_Trilateration_2019}. The popularity of this formulation stems from its connections with maximum likelihood estimation in the presence of Gaussian noise~\cite{Becks_paper_2008}. Although statistical significant, the least squares formulation is non-convex and, hence, difficult to solve exactly in general. Over the past twenty years there have been several ideas to approach this inherently challenging problem. From our perspective, most of these ideas cluster into three algorithmic families: (1) classical algorithms like gradient descent~\cite{Target_Tracking_by_GD_2021,Ite_Source_Localization_2007,Reweighted_LS_2018} and Newton method~\cite{Source_chapter_2018,Clustering_with_TOA_Intersec_2018}; (2) semidefinite (SDP) relaxations that exploit the quadratic nature of the problem~\cite{Approximate_ML_in_2D,Taylor_series_algorithm_2019,Exact_by_Lagrange_2014,Balacing_pameter_2019,NLOS_Environments_2015,domingos2022robust}; (3) trust region approaches that reformulate the problem as a quadratically constrained quadratic program~\cite{Becks_paper_2008,Bisection_for_exact_2017,Differentiable_R_LS_2017,Reweighted_LS_2018}. On a more theoretical front, we highlight the recent work of Pun et al.~\cite{Target_Tracking_by_GD_2021} which shows that the LS objective, although non-convex, is locally strongly convex at its global minima. This property is relevant for first order methods since it enables global convergence for  ``good enough''  initializations. We latter compare our approach with the gradient method of~\cite{Target_Tracking_by_GD_2021}. 
~\\
\indent  It is well known, however, that traditional least squares formulations are highly sensitive to outlier measurements~\cite{STRONG_2021,Bootstrapping_2004,ADMMforsparse2021,Balacing_pameter_2019,NLOS_Environments_2015,survey_robust,Reweighted_LS_2018,ADMMforsparse2021}. In localization, outliers can come from non-line-of-sight (NLOS) propagation conditions, typically due to indoor or dense environments~\cite{NLOS_Environments_2015,Balacing_pameter_2019}. Given this limitation, there have been several approaches towards the robustification of the original problem. Earlier work by Sun et al.~\cite{Bootstrapping_2004} focuses on a bootstrapping scheme and Huber M-estimation. Most recently, Soares et al.~\cite{STRONG_2021} also consider Huber estimation for soft rejection of outliers. The employed Huber loss is non-convex but the authors derive tight convex underestimators that lead to tight convex relaxations. The work of Zaeemzadeh et al.~\cite{Reweighted_LS_2018} achieves robustness by considering Geman-Mclcure updates to re-fit the observed measurements in a iterative fashion. The problem formulation is again routed in M-estimation and, in each iteration, the employed algorithm uses the trust region results of~\cite{Becks_paper_2008} to update the position of the target. We compare our method with both M-estimates~\cite{STRONG_2021,Reweighted_LS_2018}. A different viewpoint is to account for NLOS biases explicitly in the original LS formulation. In ~\cite{NLOS_Environments_2015} Vaghefi et al. propose a semidefinite relaxation that simultaneously estimate the targets position and the NLOS biases of the model. More recently, Wang et al.~\cite{Balacing_pameter_2019} consider a LS objective for worst case NLOS biases, assuming known error bounds. The problem is relaxed into a convex SDP by means of the S-lemma.
\paragraph{Value-at-Risk optimization} The proposed percentile formulation is well-known in portfolio optimization. In concrete, problem~\eqref{eqn:percentile_min} is known as a  value-at-risk (VaR) problem in the context of  risk analysis~\cite{CVAR_1999}. This connection is formalized latter in section~\ref{sec:risk_analysis} after we review some background on this field. Here, we describe the algorithmic literature for this optimization class, which is rooted in economic theory~\cite{Quantil_Regression_2001,Regression_Quantiles_1978,CVAR_1999}. From our perspective, there have been four main ideas to approach VaR problems like the one in~\eqref{eqn:percentile_min}: (1) conditional value-at-risk (CVaR) methods based on the seminal work of Rockafellar et al.~\cite{CVAR_1999}; (2) difference-of-convex (DC) approaches that decompose the percentile function as a difference of two convex functions~\cite{DC_opt_+_branch_and_bound_2010,DC_opt_diff_convex_algorithm}; (3) integer programming schemes that explore the combinatorial nature of the percentile objective~\cite{Opt_return_w_VaR_const_2007,Near_opt_VaR_portfolios_2015,Practical_algorithms_VaR_2015,MIP_formulations_2017} and (4) smoothing techniques that filter out local, erratic modes of the VaR objective~\cite{Smooth_VaR,Smooth_Prob}. The CVaR approach is, perhaps, the most popular since, for convex losses, it leads to a convex problem that upper bounds the VaR objective~\cite{CVAR_1999}. So decisions with a low CVaR will implicitly also have a low VaR. In the context of portfolio optimization the resulting CVaR problem is a linear program~\cite{Algorithms_for_VaR_2000}, which can be solved efficiently by standard convex methods~\cite{boyd2004convex}. Furthermore, linearity actually enables efficient iterative schemes that try to improve the original VaR objective, per iteration~\cite{Algorithms_for_VaR_2000,CVaR_proxies_2014}. Let us note, however, that we found no method on the VaR literature that can be used to approach problem~\eqref{eqn:percentile_min}. The main issue is that, in our case, the percentile function is composed with non-convex maps $x\mapsto \big|y_m-||x-a_m||\big|$. All methods described so far assume that the percentile function is composed with linear~\cite{Opt_return_w_VaR_const_2007,Near_opt_VaR_portfolios_2015,Practical_algorithms_VaR_2015,Algorithms_for_VaR_2000,DC_opt_diff_convex_algorithm} or general convex maps~\cite{CVAR_1999}.
\subsection{Contributions}
\label{sec:Contributions}  
We claim three main contributions:
\begin{enumerate}
	\item Theoretical Analysis: We present a novel majorization inequality for an optimization class that we denote as selection problems (see Section~\ref{subsection:selection_problems}). This optimization class is rich enough to include percentile problems of the form~\eqref{eqn:percentile_min}. Our analysis (Theorem~\ref{thm:majorization_theorem}) produces a first order majorizer on this challenging optimization class which, when applied to formulation~\eqref{eqn:percentile_min}, actually leads to a simple, yet highly effective localization algorithm -- RTPE;
	\item Statistical interpretation: We interpret problem~\eqref{eqn:percentile_min} under the framework of risk analysis~\cite{CVAR_1999}. We show that formulation~\eqref{eqn:percentile_min} is actually minimizing the Value-At-Risk (VaR) measure for a risk problem with outliers ( Section~\ref{sec:risk_analysis});
	\item Numerical validation: We validate our approach by considering a setup with ten anchors ($M=10$) in a square area of $1$ Km$^2$. We compare algorithm~\ref{alg:RPTE} with four state-of-the art benchmarks in target localization~\cite{Becks_paper_2008},~\cite{Target_Tracking_by_GD_2021},~\cite{Reweighted_LS_2018},~\cite{STRONG_2021}. ~
	 When we have a reasonable number of outlier measurements (say $L=3,4,5$) and their deviation from model~\eqref{eqn:data_model} is moderate to high ($1$ Km to $2.5$ Km) our method improves the accuracy of the best benchmark by at least $100$m in a$ 1$ Km$^2$area. For low model deviations ($500$ m to $750$ m), our approach still improves the best benchmark but with lower accuracy gains ( $\approx 10$ meters).
\end{enumerate}
\subsection{Paper Organization}
\label{sec:organization}  
The remaining of the paper is organized as follows. Section~\ref{sec:risk_analysis} shows that problem~\eqref{eqn:percentile_min} admits a natural interpretation as minimizing the value-at-risk (VaR) risk measure from portfolio optimization. Section~\eqref{subsection:selection_problems} proves a majorization bound for an optimization class that includes the percentile problem~\eqref{eqn:percentile_min}. This majorization bound is applied to our problem in section~\ref{subsection: Parametrizing}; in this case the derived bound is computationally tractable in the sense that it corresponds to the union of easily parametrized regions in 2D space. Section~\ref{subsection:unbounded_Phi} deals with the unboundedness of some of these regions, in order to construct a fully implementable algorithm for problem~\eqref{eqn:percentile_min} -- see Algorithm~\ref{alg:RPTE}. Section~\ref{sec:numerical_results} provides numerical evidence that, on average, our method tends to outperforms four benchmarks in target localization. Section~\ref{sec:conclusion} concludes the paper.
 \section{Robust target estimation as a Risk Analysis problem}
 \label{sec:risk_analysis}
 The percentile function $p_{L}$ is of primary importance in risk analysis~\cite{CVAR_1999}. For completeness, we review the theoretical setup of risk analysis~\cite{CVAR_1999} that introduces  the Value-at-Risk (VaR) risk measure. Afterwards we show how the robust estimation scheme of~\eqref{eqn:percentile_min} is actually optimizing the Value-at-Risk (VaR) measure for an underlying stochastic risk problem.
 \subsection{Value-At-Risk}
 Generically, risk analysis considers the problem of decision making under uncertainty and different risk measures. We are given a loss function $f(x,Z)\in \mathbf{R}$ which depends on a decision vector $x\in \mathbf{R}^{n_x}$ and on a random vector $Z\in \mathbf{R}^{n_Z}$. The primary goal of risk analysis is the design of a decision vector $x$ such that the loss function $f(x,Z)$ is \textit{typically} low. We use the keyword \textit{typically} because, for each decision vector $x$, the loss function $f(x,Z)$ is a random variable hence the mapping $x\mapsto f(x,Z)$ is non deterministic.
 \textcolor{black}{One standard risk measure~\cite{larsen2002algorithms} is the so called Value-at-Risk (VaR). In simple terms, the $\beta-$VaR of a decision vector $x$ is the maximum loss $f(x,Z)$ that can be incurred with probability at least $\beta$. In concrete $\beta-$VaR, denoted as $u_\beta(x)$, is defined as
 \begin{align}
 u_\beta(x):=\inf\{u \in \mathbf{R}: \mathbb{P} (f(x,Z)\leq u) \geq \beta  \}.
 \label{eqn:VAR_definition}
 \end{align}
 Ideally we would like to find the decision vector $x$ that minimizes $u_\beta(x)$ for a fixed confidence level $\beta\in[0,1]$ so
 \begin{align}
 \hat{x} \in \arg \min_x u_\beta(x).
 \label{eqn:VAR_problem}
 \end{align}
 In general, problem~\eqref{eqn:VAR_problem} cannot be exactly solved either because the distribution of the random vector $Z$ may be unknown or because the objective $u_\beta(x)$ may be computationally intractable. A common fix~\cite{Distributional_Interpretation}  is to collect  independent and identically distributed (i.i.d.) samples $\{z_1,\dots,z_M\}$ of $Z$ and consider the empirical distribution 
 \begin{align}
 u\mapsto \frac{1}{M} \sum_{m=1}^M \mathbf{1}_{\{f(x,z_m)\leq u\}}
 	\label{eqn:empirical_dist_def}
 \end{align}
  with $(x,u)\mapsto \mathbf{1}_{\{f(x,z_m)\leq u\}}$ the indicator function of the event $\{f(x,z_m)\leq u\}$. So $\mathbf{1}_{\{f(x,z_m)\leq u\}}=1$ if variables $x$ and $u$ are such that $f(x,z_m)\leq u$ and $\mathbf{1}_{\{f(x,z_m)\leq u\}}=0$ otherwise. By using the empirical distribution~\eqref{eqn:empirical_dist_def} on definition~\eqref{eqn:VAR_definition} we get the approximate $\beta-$VaR
 \begin{align}
 	x\mapsto \inf\big\{u: \frac{1}{M} \sum_{m=1}^M \mathbf{1}_{\{f(x,z_m)\leq u\}} \geq \beta  \big\}.
	\label{eqn:VAR_problem_approximate}
\end{align}
Assume now, for simplicity, that $\beta$ is a multiple of of $1/M$. In this case the infimum in~\eqref{eqn:VAR_problem_approximate} is exactly the $(1-\beta)\, M$ percentile of the observations $f(x,z_m)$, that is
 \begin{align}
	\inf\big\{u: \frac{1}{M} \sum_{m=1}^M \mathbf{1}_{\{f(x,z_m)\leq u\}} \geq \beta  \big\}=p_{(1-\beta)\, M} \{ f(x,z_m)  \}
	\label{eqn:VAR_problem_approximate_2}
\end{align}
Minimizing the approximate objective~\eqref{eqn:VAR_problem_approximate_2} leads to
 \begin{align}
\min_x p_{(1-\beta)\, M} \{f(x,z_m)  \}.
\label{eqn:VAR_problem_smapled}
\end{align}}
So, as seen, a sampled VaR problem is equivalent to a percentile formulation where we optimize the worst case loss after discarding the $(1-\beta)\,M$ highest losses $f(x,z_m)$.
 \subsection{Risk Analysis Interpretation of the Robust Target Estimator}
Considering measurements~\eqref{eqn:data_model}, we start by modelling both the uncertainties $u_m$ and anchor positions $a_m$ as realization  of two random objects: an additive uncertainty $U$ in $\mathbf{R}$ and a anchor position vector $A$  in $\mathbf{R}^2$. Here we assume that the anchor positions $a_m$ are observed realizations of $A$, while the additive uncertainties  $u_m$ are unobserved samples of $U$. Given $U$ and $A$ we interpret~\eqref{eqn:data_model} as a sampled version of 
 \begin{align}
 Y=||x-A||+U.
 \label{eqn:model_probability}
 \end{align} 
  Let $Z=(Y, A)$ denote the random vector of observed quantities. Both these quantities are observable since, in our model, we estimate the position of target $x$ from measurements $y_m$ of $Y$ and anchor positions $a_m$ of $A$.
 Given $Z$, we estimate $x$ by considering the deviation loss between the true model $||A-x||$ and measurements $Y$
  \begin{align}
 f(x;Z):=| \left\|A-x\right\|-Y | ,\enspace Z=(Y, A).
 \label{eqn:target_loss_function}
 \end{align}
  So we are considering a risk analysis problem  where the measurements $Y$ and anchors $A$ are observed random objects with an underlying joint distribution and we want to design/estimate a target position $x$ such that model~\eqref{eqn:model_probability} is accurate, that is, the additive uncertainty $U$ is small according to~\eqref{eqn:target_loss_function}. 
  ~\\
  As explained in section~\ref{sec:introduction}, in a robust estimation problem the measurements $Y$ tend to be affect, most of the time, by some form of inlier noise but, sporadically, there exist outlier measurements 
 in model~\eqref{eqn:model_probability}. We model this behaviour by assuming that, in $M$ total measurements, $L$ of them are outlier. Given $L$, it is reasonable to choose $\beta$, the confidence level of the risk analysis problem, equal to $1-L/M$, the ratio  of inlier measurements.  
 So, in a risk analysis context, we can compute the decision vector $x$ by minimizing the $\beta-$VaR risk measure with confidence level $\beta=1-L/M$. This boils down to solving problem~\eqref{eqn:VAR_problem} with loss~\eqref{eqn:target_loss_function}. Given samples $(y_m,a_m)$ of $(Y,Z)$ we use approximation~\eqref{eqn:VAR_problem_smapled} to get
 \begin{align}
 \hat{x} \in \arg \min_x p_{L  } \{ | y_{{m}} -\left\|x-a_{m}\right\| | \}.
 \label{eqn:VAR_problem_localization_sample}
 \end{align}
 So the problem of estimating the target position $x$, through~\eqref{eqn:percentile_min}, actually corresponds to minimizing the empirical VaR risk measure, from the observed data pairs $(a_m,y_m)$.
\section{A Simple Majorizing Algorithm}
\label{sec:majorizing_algorithm}
In this section we detail our algorithm for solving problem~\eqref{eqn:percentile_min}. Our method is simple and consists in majorizing the solution set of problem~\eqref{eqn:percentile_min}, that is, we design a set $\Phi\subseteq \mathbf{R}^2$ that contains any minimizer of~\eqref{eqn:percentile_min}, so
\begin{align}
\arg \min_x  p_{L} \{ |y_m- \left\|x-a_m\right\|   |   \} \subseteq \Phi.
\label{eqn:majorization_ineq}
\end{align}
Our set $\Phi$ is computationally tractable, being the union of easily parameterized regions in 2D space. In concrete, set $\Phi$ is comprised of singletons, circumferences, ellipses and bounded half-hyperbolas. Given that all these regions can be efficiently parameterized our algorithm is simple: we create a grid over $\Phi$ and choose the best grid point for the objective~\eqref{eqn:percentile_min} -- see algorithm~\ref{alg:RPTE} in section~\ref{subsection:unbounded_Phi}. Our empirical results show that a fine grid is able to outperform several benchmarks in robust localization, while sharing similar computing times.
\subsection{Designing the Majorizing  Set $\Phi$ }
\label{subsection:selection_problems}
 In this section we construct a non-trivial set $\Phi$ such that~\eqref{eqn:majorization_ineq} holds. In fact our technique holds for a broader class of problems that we denote as selection problems. Given a dimension $D$, let $\{f_m\}_{m=1}^M$ denote a collection of real-valued functions on $\mathbf{R}^D$ and
  $\{S_m\}_{m=1}^M\subseteq2^{\mathbf{R}^D}$ a partition\footnote{So the sets $S_m$ are disjoint while covering $\mathbf{R}^D$, i.e., $\mathbf{R}^D=S_1 \cup \dots \cup S_M$.} of $\mathbf{R}^D$. Now let $f$ denote the function that is equal to ${f}_m$ on $S_m$ so
\begin{align}
f(x):= \mathbf{1}_{S_1}(x) {f}_1(x)+\dots+\mathbf{1}_{S_M}(x) {f}_M(x)
\label{eqn:selection_function}
\end{align}
In words, function $f$ selects among $M$ possible atoms  $\{{f}_m\}_{m=1}^M$ and the selection rule is described by sets $\{S_m\}_{m=1}^M$ such that $f(x)=f_m(x)$ whenever $x\in S_m$. A function of the form~\eqref{eqn:selection_function} is denoted as an $M$-th selector.
\begin{definition}
	\label{def:selection_function}
	A function $f:\mathbf{R}^D \mapsto \mathbf{R}$ is an $M$-th selector if there exists functions $\{f_m\}_{m=1}^M$ and a partition $\{S_m\}_{m=1}^M$ of $\mathbf{R}^D$ such that $f$ selects ${f}_m$ on $S_m$, that is
\begin{align}
f(x)= \mathbf{1}_{S_1}(x) f_1(x)+\dots+\mathbf{1}_{S_M}(x) f_M(x).
\label{eqn:selection_function_2}
\end{align}
Let  $\mathcal{S}_M$ denote the collection of $M$-th order selectors.
\end{definition}
We consider the minimization of an $M$-th selector $f\in \mathcal{S}_M$,
\begin{align}
\min_x f(x).
\label{eqn:selection_problem}
\end{align}
Under a mild technical condition, the next theorem produces a majorizer set $\Phi \subseteq \mathbf{R}^M$ for problem~\eqref{eqn:selection_problem} that is independent of the selection rules $\{S_m\}_{m=1}^M$, that is, a set $\Phi$ that only depends on $M$ atoms $\{f_m\}_{m=1}^M$  that make up the selector $f\in \mathcal{S}_M$. This is useful because, in general, the selection rules $S_m$ are intractable, non-convex regions in $\mathbf{R}^D$ while the atoms $f_m$ have simple algebraic expressions. In concrete, we typically have closed-form expressions for the derivatives of $f_m$ and can compute the non differentiable points of $f_m$. Majorizer $\Phi$ will use these computations on atoms $\{f_m\}_{m=1}^M$. Assuming it exists, let $\nabla f(x)$ denote the gradient of function $f$ at point $x$.
In simple terms, Theorem~\ref{thm:majorization_theorem} combines the selection structure of $f\in \mathcal{S}_M$  with the well-known first-order necessary condition for optimality; namely if $x^*$ is a minimizer of $f$ and $f$ is differentiable at point $x^*$ then $x^*$ must be a stationary point of $f$, that is, $\nabla f(x^*)=0$. Given an arbitrary set $\Phi \subseteq \mathbf{R}^M$  let $\text{cl } \Phi$ denote the closure\footnote{The closure of $\Phi\subseteq \mathbf{R}^M$ is defined as the set of limits points of $\Phi$ so $\text{cl }  \Phi=\{x:  \exists\, \{x_n\}_{n\geq 1}\subseteq \Phi, x_n\rightarrow_n x \}$.} of $\Phi$ and $\text{int } \Phi$ its interior\footnote{The interior of $\Phi\subseteq \mathbf{R}^M$ is defined as the set of interior points of $\Phi$ so $\text{int }  \Phi=\{x:  \exists\,\, \epsilon>0,\enspace B(x,\epsilon)\subseteq \Phi\}$ with $ B(x,\epsilon)$ an open ball in $\mathbf{R}^M$ so $B(x,\epsilon)=\{v: ||x-v||<\epsilon\}$.}. The boundary of $\Phi$, denoted as $\partial\, \Phi$, is the set of points that belong to the closure of $\Phi$ but not to its interior,
\begin{align}
\partial \, \Phi= \text{cl } \Phi \setminus \text{int } \Phi.
\label{eqn:boundary_definition}
\end{align}
\begin{theorem}
	\label{thm:majorization_theorem}
	Let $f\in \mathcal{S}_M$ denote an $M$-th selector with atoms  $\{f_m\}_{m=1}^M$ and selection rules $\{S_m\}_{m=1}^M$. Assume that atoms $f_m$ are constant along the boundaries of the partition $S_m$, that is, any point $x$  in the intersection of two distinct boundaries $\partial S_{m_1},\partial S_{m_2}$ (for $m_1\neq m_2$) achieves the same value
	\begin{align}
m_1\neq m_2:\enspace x\in \partial S_{m_1} \cap \partial S_{m_2}   \Rightarrow  f_{m_1}(x)=f_{m_2}(x).\enspace \enspace
\label{eqn:atoms_constant_along_boundaries}
	\end{align}
	 Let $x^*$ denote a minimizer of $f$. Then $x^*$ must be a stationary point of some atom $f_m$ or a non-differentiable point of yet another atom  $f_{\overline{m}}$, or there must exist two distinct atoms $m_1\neq m_2$ that coincide on $x$ so $f_{m_1}(x)=f_{m_2}(x)$. In compact notation 
	\begin{align} 
	\arg \min_x f(x)\neq \emptyset \enspace \Rightarrow \enspace
	\arg \min_x f(x)  \subseteq  & \enspace \Phi
	\label{eqn:majorization_Theorem}
	\end{align}
	with majorizer $\Phi\subseteq \mathbf{R}^D$ given by
	\begin{align}
	\Phi=&\bigcup_{m=1}^M  \{x: \nabla f_m(x)=0  \} \,\cup  \bigcup_{\overline{m}=1}^M  \{x: \nexists \,\, \nabla f_{\overline{m}}(x) \}  \\
	&\cup    \bigcup_{m_1=1}^{M-1}  \,\, \bigcup_{m_2=m_1+1}^{M}\{x: f_{m_1}(x)=f_{m_2}(x) \}.
	\label{eqn:majorizer}
	\end{align}
\end{theorem}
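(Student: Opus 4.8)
The plan is to fix a minimizer $x^* \in \arg\min_x f(x)$, which exists by the left-hand side of~\eqref{eqn:majorization_Theorem}, and to argue by a dichotomy on the position of $x^*$ relative to the cell of the partition containing it. Since $\{S_m\}_{m=1}^M$ covers $\mathbf{R}^D$, there is an index $m_0$ with $x^*\in S_{m_0}$; as $S_{m_0}\subseteq\text{cl } S_{m_0}$, either $x^*\in\text{int } S_{m_0}$ or, by~\eqref{eqn:boundary_definition}, $x^*\in\partial S_{m_0}$.

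In the interior case, I would choose $\epsilon>0$ with $B(x^*,\epsilon)\subseteq\text{int } S_{m_0}$. On this ball $f$ coincides with the single atom $f_{m_0}$ by~\eqref{eqn:selection_function_2} and the disjointness of the $S_m$, so the global inequality $f(x^*)\le f(y)$ restricted to $y\in B(x^*,\epsilon)$ reads $f_{m_0}(x^*)\le f_{m_0}(y)$; hence $x^*$ is a local minimizer of $f_{m_0}$. If $f_{m_0}$ is not differentiable at $x^*$, then $x^*\in\{x:\nexists\,\nabla f_{m_0}(x)\}\subseteq\Phi$ and we are done; otherwise the first-order necessary condition for optimality gives $\nabla f_{m_0}(x^*)=0$, so $x^*\in\{x:\nabla f_{m_0}(x)=0\}\subseteq\Phi$. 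This is the only place the minimizing property of $x^*$ enters.

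In the boundary case the goal is to produce a \emph{second} cell whose boundary also contains $x^*$, so that hypothesis~\eqref{eqn:atoms_constant_along_boundaries} can be invoked. Since $x^*\notin\text{int } S_{m_0}$, every ball $B(x^*,1/n)$ meets $\mathbf{R}^D\setminus S_{m_0}=\bigcup_{m\neq m_0}S_m$; picking $x_n$ in this intersection for each $n$ and using that there are only finitely many cells, a pigeonhole and subsequence argument yields a fixed index $m_1\neq m_0$ and a subsequence with $x_n\in S_{m_1}$ and $x_n\to x^*$, whence $x^*\in\text{cl } S_{m_1}$. Moreover $x^*\notin\text{int } S_{m_1}$, since otherwise a neighborhood of $x^*$ would lie in $S_{m_1}$ while also meeting $S_{m_0}$ (because $x^*\in\text{cl } S_{m_0}$), contradicting $S_{m_0}\cap S_{m_1}=\emptyset$. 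Therefore $x^*\in\partial S_{m_0}\cap\partial S_{m_1}$ with $m_0\neq m_1$, and~\eqref{eqn:atoms_constant_along_boundaries} gives $f_{m_0}(x^*)=f_{m_1}(x^*)$; relabeling the pair so the smaller index comes first, $x^*\in\bigcup_{m_1<m_2}\{x:f_{m_1}(x)=f_{m_2}(x)\}\subseteq\Phi$.

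The computations involved are essentially trivial; the step that needs care, and which I expect to be the main (if mild) obstacle, is the boundary case — specifically the claim that a boundary point of one cell is automatically a boundary point of at least one other cell. This relies on the finiteness of the partition (for the pigeonhole step) and on disjointness (to exclude $x^*\in\text{int } S_{m_1}$), and it is precisely where~\eqref{eqn:atoms_constant_along_boundaries} does its work, converting an intractable statement about the sets $S_m$ into the tractable membership $f_{m_1}(x)=f_{m_2}(x)$ that defines the last union in $\Phi$.
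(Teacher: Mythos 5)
Your proof is correct and follows essentially the same route as the paper's: split on whether the minimizer lies in the interior or on the boundary of its partition cell, apply the first-order necessary condition to the relevant atom in the interior case, and show in the boundary case that the point lies on the boundary of a second cell so that hypothesis~\eqref{eqn:atoms_constant_along_boundaries} yields equality of two atoms. The only differences are presentational --- you argue pointwise and prove the key sub-claim (a boundary point of one cell lies on the boundary of another) by a pigeonhole/subsequence argument, whereas the paper manipulates sets via the identities $\partial A=\partial(\mathbf{R}^D\setminus A)$ and $\partial(A\cup B)\subseteq\partial A\cup\partial B$; both rely on the finiteness of the partition at exactly the same spot.
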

\begin{proof}
   Let $x^*$ be a minimizer of $f$. By a first-order criterion, either $x^*$ is a stationary point ($\nabla f(x^*)=0$) or $f$ is non-differentiable at $x^*$. In compact notation
	\begin{align}
		\arg \min_x f(x)  \subseteq  & \enspace \underbrace{\{x: \nabla f(x)=0 \}}_{ \textstyle :=\Phi_1} \cup\underbrace{\{x: \nexists \,\,  \nabla f(x)=0 \}}_{ \textstyle :=\Phi_2}.\enspace \enspace
		\label{eqn:majorizer_aux_1}
	\end{align}
	Note that~\eqref{eqn:majorizer_aux_1} is already a valid majorizer of problem~\eqref{eqn:selection_problem} (although not particularly useful due to the selection structure of $f$). The proof proceeds by successively upper bounding the right-hand side of~\eqref{eqn:majorizer_aux_1} until we get~\eqref{eqn:majorizer}.  We start by analysing $\Phi_1$ by using partition $S_m$, so
	\begin{align}
	\Phi_1 = \bigcup_{m=1}^M \Phi_1 \cap S_m &\subseteq \bigcup_{m=1}^M \Phi_1 \cap \text{cl } S_m  \\
	&=\Big( \bigcup_{m=1}^M \Phi_1 \cap \partial\, S_m \Big) \cup \Big( \bigcup_{m=1}^M \Phi_1 \cap \text{int } S_m \Big)  \\
	&\subseteq \Big( \bigcup_{m=1}^M  \partial\, S_m  \Big) \cup \Big( \bigcup_{m=1}^M \Phi_1 \cap \text{int } S_m  \Big).
	\label{eqn:majorizer_aux_2}
	\end{align}
	The first inequality bounds each individual set $S_m$ by its closure $\text{cl } S_m$; the second equality uses~\eqref{eqn:boundary_definition}. Now if $x$ belongs to $\Phi_1 \cap \text{int } S_m$ then $f(x)=f_m(x)$ by~\eqref{eqn:selection_function_2}; furthermore $\nabla f(x)=\nabla f_m(x)$ since $x$ is an interior point of $S_m$.	
	 Using this property we upper bound $\Phi_1 \cap \text{int } S_m$ by
	\begin{align}
	\Phi_1 \cap \Big( \text{int } S_m&=\{x: \nabla f(x)=0 \} \Big) \cap \Big( \text{int } S_m \Big)\\
	                           &=\{x: \nabla f_m(x)=0 \} \cap \text{int } S_m \\
	                           &\subseteq \{x: \nabla f_m(x)=0 \}.	
	 	\label{eqn:majorizer_aux_3}                         
	\end{align}
	Combining bounds~\eqref{eqn:majorizer_aux_2} with~\eqref{eqn:majorizer_aux_3} yields
		\begin{align}
	\Phi_1 \subseteq \Big(\bigcup_{m_1=1}^M  \partial\, S_{m_1} \Big)  \cup \Big( \bigcup_{m=1}^M \{x: \nabla f_m(x)=0 \}\Big).
	\label{eqn:majorizer_aux_4}
	\end{align}
	where the index set on $\partial\, S_{m_1}$ was purposely changed to prepare for the next step. Let us fix an index $m_1$. Using assumption~\eqref{eqn:atoms_constant_along_boundaries}, we upper bound the boundary $\partial\, S_{m_1}$ as follows
	 		\begin{align}
	  \partial\, S_{m_1} &= \partial\, S_{m_1} \, \cap \,  \partial\, (\mathbf{R}^D \setminus S_{m_1})\\
	  &= \partial\, S_{m_1} \, \cap \,  \partial\, (S_1\cup \dots \cup S_{m_1-1} \cup S_{m_1+1} \dots \cup S_M)\\
	  &\subseteq \bigcup_{m_2\neq m_1}  \partial\, S_{m_1} \, \cap \,  \partial\, S_{m_2} \\
	  &\subseteq  \bigcup_{m_2\neq m_1} \{x: f_{m_1}(x)=f_{m_2}(x)\}.
	 \label{eqn:majorizer_aux_5}
	 \end{align}
	 The first equality holds because the boundary is closed under complements, that is, $\partial \, A= \partial\, (\mathbf{R}^D\setminus A)$ for any set $A$; the second equality follows because $\{S_m\}_{m=1}^M$ is a partition of $\mathbf{R}^D$; the first inequality uses the generic property $\partial\, (A \cup B) \subseteq \partial\, A \cup \partial\, B$. The final inequality uses assumption~\eqref{eqn:atoms_constant_along_boundaries}. Combining results~\eqref{eqn:majorizer_aux_4} and~\eqref{eqn:majorizer_aux_5} leads to
	 		\begin{align}
	 \Phi_1 \subseteq \hspace{-0.15cm} \bigcup_{m_1\neq m_2} \{x: f_{m_1}(x)=f_{m_2}(x)\}\,  \cup  \bigcup_{m=1}^M \{x: \nabla f_m(x)=0 \}.
	 \label{eqn:majorizer_aux_6}
	 \end{align}
	 The previous analysis can be adapted for $\Phi_2$ because, for an interior point $x\in \text{int } S_m$,  gradient $\nabla f(x)$ exists if and only if $\nabla f_m(x)$ exists.
	 Using similar arguments we majorize $\Phi_2$ by
	 	 		\begin{align}
	 \Phi_2 \subseteq \hspace{-0.15cm} \bigcup_{m_1\neq m_2} \{x: f_{m_1}(x)=f_{m_2}(x)\} \, \cup \bigcup_{\overline{m}=1}^M \{x: \nexists \,\, \nabla f_{\overline{m}}(x) \}.
	 \label{eqn:majorizer_aux_7}
	 \end{align}
	 Majorizer $\Phi$, in~\eqref{eqn:majorizer}, is the union of bounds~\eqref{eqn:majorizer_aux_6} and~\eqref{eqn:majorizer_aux_7}.
\end{proof}
\begin{remark}
	\label{remark:necessity_of_assumption}
	Simple examples show that condition~\eqref{eqn:atoms_constant_along_boundaries} is actually necessary for majorization~\eqref{eqn:majorizer} to hold, that is, if the atoms $\{f_m\}_{m=1}^M$ of $f$ are not constant along the boundaries of the selection rules $\{S_m\}_{m=1}^M$ then we cannot generally conclude that any minimizer $x^*$ of~\eqref{eqn:selection_problem} (assuming it exists) is contained in $\Phi$. As a concrete example consider $D=1$ and the selector $f\in \mathcal{S}_2$ with atoms and partition
\begin{align}
\	f_1(x)&= \begin{cases} |x+2| &\mbox{if } x<-1,\enspace \enspace S_1=(-\infty,-1.5),  \\
	0 & \mbox{if } x\geq -1 \end{cases}\\ f_2(x)&= \begin{cases} x+1 &\mbox{if } x< -1/2 \\
	1/2 & \mbox{if } x\geq -1/2 \end{cases},\enspace \, S_2=[-1.5,+\infty).
	\end{align}
Both rules $S_1,S_2$ have the same boundary $\partial S_1=\partial S_2=\{-1.5\}$ but different atom values $f_1(-1.5)=0.5 \neq -0.5=f_2(-1.5)$. So condition~\eqref{eqn:atoms_constant_along_boundaries} does not hold and, in this example,
	$\Phi$ is not a valid majorizer of problem~\eqref{eqn:selection_problem}. In concrete we get
	\begin{align}
	\arg \min_x f(x)=\{-1.5\} \nsubseteq & \{-2\} \cup [-1,+\infty) = \Phi.
	\end{align}
\end{remark}
It turns out that the localization problem~\eqref{eqn:percentile_min} is, in fact, a selection problem of the form~\eqref{eqn:selection_problem} due to the percentile function $p_L$. In concrete the objective of~\eqref{eqn:percentile_min} is an $M$-th selector function since it can be decomposed as
\begin{align*}
p_{L} \{ |y_m- \left\|x-a_m\right\|   |   \}= f_1(x) \mathbf{1}_{\mathcal{S}_1}(x)+\dots +f_M(x) \mathbf{1}_{\mathcal{S}_M}(x)
\label{eqn:decomposing_percentile}
\end{align*}
for atoms $f_m(x):=|y_m- \left\|x-a_m\right\| |$ and selection rules 
{\small\begin{align}
S_1&=\{x: f(x)=f_1(x) \} \\
S_2&=\{x: f(x)=f_2(x),f_1(x)\neq f(x) \}\\
S_3&=\{x: f(x)=f_3(x), f_1(x)\neq f(x), f_2(x)\neq f(x) \} \\
\vdots \\
S_M&=\{x: f(x)=f_M(x), f_1(x)\neq f(x), \dots, f_{M-1}(x)\neq f(x) \},
\label{eqn:selection_rules}
\end{align}}
with $f(x):=p_{L} \{ |y_m- \left\|x-a_m\right\|   |   \}$. In words, the atoms of $p_{L} \{ |y_m- \left\|x-a_m\right\|   |   \}$ are the model deviations $ x \mapsto |y_m- \left\|x-a_m\right\| $. The selection rule $S_{m_1}$ identifies the set of points such that $p_{M}\{f_m(x)\}=f_{m_1}(x)$, while insuring that $\{S_m\}_{m=1}^M$ is a proper partition~\footnote{ The collection $\{S_m\}_{m=1}^M$ is pairwise disjoint due to condition $f_{\overline{m}}(x)\neq f(x)$ for $\bar{m}\neq m$ in~\eqref{eqn:selection_rules}. The sets $\{S_m\}_{m=1}^M$ span $\mathbf{R}^2$  because the $L$-th percentile $p_L$ of a vector $z$  is an element of $z$, as seen in~\eqref{eqn:sort_elements}.} of $\mathbf{R}^2$. 
\\~\\
Given the connection between problems~\eqref{eqn:percentile_min} and~\eqref{eqn:selection_problem} a natural idea is to use Theorem~\ref{thm:majorization_theorem} to majorize the solutions\footnote{Problem~\eqref{eqn:percentile_min} has a non-empty minimizer set because the functions $x\mapsto |y_m- \left\|x-a_m\right\| $ are all coercive regardless of $m$.} of problem~\eqref{eqn:percentile_min}. If majorization~\eqref{eqn:majorization_Theorem} holds then we can, without loss of generality, reduce the search space from $\mathbf{R}^2$ to  $\Phi$ so
\begin{align}
\min_x p_{L} \{ |y_m- \left\|x-a_m\right\|   |   \}=\min_{x\in \Phi} p_L \{ |y_m- \left\|x-a_m\right\|   |   \}.\enspace 
\label{eqn:reduced_search_space}
\end{align}
Result~\eqref{eqn:reduced_search_space} is useful because it suggests a direct method to solve the original problem: we can simply create a finite grid over $\Phi$ and choose the grid point that yields the lowest objective. Let us note, however, that in order to apply Theorem~\ref{thm:majorization_theorem} we must 
show that the atoms $\{f_m\}_{m=1}^M$ are constant along the boundaries of the selection rules $\{S_m\}_{m=1}^M$. In general verifying~\eqref{eqn:atoms_constant_along_boundaries} for an arbitrary selector $f\in \mathcal{S}_M$ might be non-trivial. It turns out, however, that for a broad class of percentiles objectives assumption~\eqref{eqn:atoms_constant_along_boundaries} becomes simple to verify; the next lemma shows that condition~\eqref{eqn:atoms_constant_along_boundaries} holds for percentile objectives composed with continuous mappings.
\begin{lemma}
	\label{lemma:percentiles_composed_continuous}
	Given $M$ continuous functions  $\{f_m\}_{m=1}^M$ let $\{S_m\}_{m=1}^M$ denote the partition of~\eqref{eqn:selection_rules}. Then, the percentile function $x\mapsto p_L\big(f_m(x)\big)$ is constant along the boundaries of $\{S_m\}_{m=1}^M$, that is, condition~\eqref{eqn:atoms_constant_along_boundaries} holds for $	f=p_{L}\{ f_m \}$.
\end{lemma}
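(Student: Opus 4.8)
The plan is to reduce~\eqref{eqn:atoms_constant_along_boundaries} for $f=p_L\{f_m\}$ to two elementary observations: that the percentile map $p_L:\mathbf{R}^M\to\mathbf{R}$ is continuous, and that each selection set $S_m$ in~\eqref{eqn:selection_rules} is \emph{by construction} contained in the locus where the composite $f$ agrees with the atom $f_m$. Granting these, a boundary point shared by $S_{m_1}$ and $S_{m_2}$ is a limit of points of $S_{m_1}$ (where $f=f_{m_1}$) and also a limit of points of $S_{m_2}$ (where $f=f_{m_2}$), so continuity forces $f_{m_1}=f_{m_2}$ there.

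First I would record that $p_L$ is continuous. This follows by writing the $(L+1)$-th largest coordinate of $z\in\mathbf{R}^M$ as $p_L(z)=\min_{|I|=L+1}\max_{i\in I}z_i$, a finite minimum of finite maxima of coordinate functions; each inner function $z\mapsto\max_{i\in I}z_i$ is continuous (indeed convex), hence so is $p_L$. Therefore $f=p_L(f_1,\dots,f_M)$ is continuous, being the composition of the continuous map $p_L$ with the continuous map $x\mapsto(f_1(x),\dots,f_M(x))$. Next I would prove the key claim: $f\equiv f_m$ on $\text{cl } S_m$ for every $m$. On $S_m$ itself this is immediate from the first defining clause of~\eqref{eqn:selection_rules}, namely $f(x)=f_m(x)$; the remaining clauses (disjointness) are not needed. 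For a general $x\in\text{cl } S_m$ one picks a sequence $\{x_n\}\subseteq S_m$ with $x_n\to x$ and concludes $f(x)=\lim_n f(x_n)=\lim_n f_m(x_n)=f_m(x)$ by continuity of $f$ and $f_m$ (if $S_m=\emptyset$ the claim is vacuous).

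Finally I would close the argument: given $m_1\neq m_2$ and $x\in\partial S_{m_1}\cap\partial S_{m_2}$, since $\partial S_m=\text{cl } S_m\setminus\text{int } S_m\subseteq\text{cl } S_m$ the point $x$ lies in both $\text{cl } S_{m_1}$ and $\text{cl } S_{m_2}$, so the claim gives $f_{m_1}(x)=f(x)=f_{m_2}(x)$, which is exactly~\eqref{eqn:atoms_constant_along_boundaries}. I do not expect a substantive obstacle here; the only point that warrants care is establishing continuity of $p_L$ cleanly, and making sure that the closure step relies solely on the ``$f(x)=f_m(x)$'' clause of~\eqref{eqn:selection_rules}, since passing to closures may re-introduce the overlaps that the disjointness clauses were designed to remove.
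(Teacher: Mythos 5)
Your argument is correct in outline and takes a genuinely different route from the paper's. The paper never establishes continuity of $f=p_L\{f_m\}$; instead it bounds $\partial S_{m_1}\subseteq\text{cl } S_{m_1}\subseteq\text{cl }\{x:f(x)=f_{m_1}(x)\}$ and then shows that the last set is already closed, by writing $\{x:f(x)=f_{m_1}(x)\}$ as a finite union, over outlier sets $\mathcal{O}$ with $|\mathcal{O}|=L$, of sets cut out by non-strict inequalities between continuous functions. Your route --- continuity of $p_L$, hence of $f$, hence $f\equiv f_m$ on $\text{cl } S_m$ by passing to the limit along a sequence in $S_m$ --- reaches the same intermediate claim ($f=f_{m_1}$ on $\partial S_{m_1}$, then intersect) and is arguably cleaner and more conceptual; what the paper's version buys is that it never has to discuss the order statistic $p_L$ as a function on $\mathbf{R}^M$ at all. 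Your observation that only the clause $f(x)=f_m(x)$ of~\eqref{eqn:selection_rules} is needed, not the disjointness clauses, is accurate and matches the paper's use of $S_{m_1}\subseteq\{x:f(x)=f_{m_1}(x)\}$.

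One concrete error to fix: your min--max representation of the percentile is mis-indexed. With the paper's convention $p_L(z)=z_{\pi_L}$, the $(L+1)$-th \emph{largest} entry, one has that $\min_{|I|=k}\max_{i\in I}z_i$ equals the $k$-th \emph{smallest} entry of $z$ (the minimum is attained by letting $I$ collect the $k$ smallest coordinates, and no $I$ of size $k$ can avoid them all). Hence $\min_{|I|=L+1}\max_{i\in I}z_i$ is the $(L+1)$-th smallest entry, i.e.\ $p_{M-1-L}(z)$, not $p_L(z)$; already for $L=0$ your formula returns $\min_i z_i$ instead of $\max_i z_i$. The correct identities are
\begin{align}
p_L(z)=\min_{|I|=M-L}\,\max_{i\in I}z_i=\max_{|I|=L+1}\,\min_{i\in I}z_i,
\end{align}
and either one yields continuity (indeed $1$-Lipschitzness in the sup norm) of $p_L$, after which the rest of your proof goes through unchanged.
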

\begin{proof}
\textcolor{black}{	To prove~\eqref{eqn:atoms_constant_along_boundaries} we majorize the boundary $\partial \, S_{m_1}$ by using simple topological properties of the boundary and closure operators. In concrete,
	\begin{align}
	\partial \, S_{m_1} &\subseteq \text{cl } \, S_{m_1}\\  &\subseteq \text{cl }  \Big( \big\{x: f(x)=f_{m_1}(x)\big\} \Big), \\
	&\begin{aligned}=\text{cl } \, \Big( \bigcup\limits_{\mathcal{O}:\,\, |\mathcal{O}|=L} \big\{x:  &f_{m}(x)\geq  f_{m_1}(x),\enspace m\in \mathcal{O}, \\[-2ex] 
	& f_{m_1}(x)\geq f_{\bar{m}}(x), \bar{m}\in \mathcal{O}^C  \big\}    \Big)
	\end{aligned} \\
		&\begin{aligned}=\bigcup\limits_{\mathcal{O}:\,\, |\mathcal{O}|=L} \text{cl } \, \Big( \big\{x:  &f_{m}(x)\geq  f_{m_1}(x),\enspace m\in \mathcal{O}, \\[-2ex] 
		& f_{m_1}(x)\geq f_{\bar{m}}(x), \bar{m}\in \mathcal{O}^C  \big\}    \Big)
	\end{aligned} \\
	&\begin{aligned}=\bigcup\limits_{\mathcal{O}:\,\, |\mathcal{O}|=L} \Big( \big\{x:  &f_{m}(x)\geq  f_{m_1}(x),\enspace m\in \mathcal{O}, \\[-2ex] 
		& f_{m_1}(x)\geq f_{\bar{m}}(x), \bar{m}\in \mathcal{O}^C  \big\}    \Big)
	\end{aligned}
	\label{eqn:boundary_aux_1}
	\end{align}
	with $\mathcal{O}$ the set of outliers. The first inequality uses definition~\eqref{eqn:boundary_definition}. The second inequality makes uses of~\eqref{eqn:selection_rules} and the generic property $\text{cl}\, (A\cap B)\subseteq \text{cl}\,A\,  \cap \, \text{cl}\,B$. The first equality uses the percentile definition in $f=p_{L}\{f_m\}$. 
For the second equality note that $\text{cl }\, A \cup B = \text{cl }\, A \, \cup \, \text{cl }\, B$ for any sets $A,B$. The continuity assumption of $\{f_m\}_{m=1}^M$ justifies the final equality. In conclusion, for any $x$ in $\partial\, S_{m_1}$ there exists $L$ outliers in $\mathcal{O}$ such that $f_{m_1}$ is greater than the atoms in $\mathcal{O}^C$ and smaller than the atoms in $\mathcal{O}$. This is equivalent to $f(x)=f_{m_1}(x)$. Applying the same reasoning for $x$ in  $\partial\, S_{m_2}$ leads to $f(x)=f_{m_2}(x)$. Hence, for $x$ in the intersection $\partial\,S_{m_1} \, \cap \,\partial\,S_{m_2}$, we get the desired equality $f_{m_1}(x)=f(x)=f_{m_2}(x)$.}
\end{proof}
\begin{example}
	\label{example:application_majorization}
	To illustrate the usefulness of Theorem~\ref{thm:majorization_theorem} and Lemma~\ref{lemma:percentiles_composed_continuous}, we present a first example where the majorizer set $\Phi$ actually reduces to a finite set. In this case the aforementioned grid method actually solves the problem, in the sense that it leads to a finite-time algorithm that returns a provable minimizer of~\eqref{eqn:selection_problem}. The example as follows: we want to approximate a data vector $d\in \mathbf{R}^M$ by a scalar $x\in \mathbf{R}$, given that $d$ has $L$ outliers entries. Using the motivation of section~\ref{sec:introduction} we formalize this problem as 
	\begin{align}
	\min_x f(x),
	\label{eqn:problem_specific}
	\end{align}
	with $f(x)=p_L\{ | x-d_m|\}$. 
	\textcolor{black}{Assume all entries of  $d$ are distinct (so $d_{m_1}\ne d_{m_2}$ for $m_1\neq m_2$).} Then, simple computations give 
	\begin{align}
\Phi=&\bigcup_{m=1}^M  \Big\{d_m \Big\} \, \cup \,    \bigcup_{m_1=1}^{M-1}  \,\, \bigcup_{m_2=m_1+1}^{M} \Big\{ {d}_{m_1,m_2} \Big\},
\label{eqn:simple_Example}
	\end{align}
	with ${d}_{m_1,m_2}:=(d_{m_1}+d_{m_2})/2$ denoting the average of points $d_{m_1}$ and $d_{m_2}$. So any minimizer of~\eqref{eqn:problem_specific} must be a original data point $s_m$ or the average ${d}_{m_1,m_2}$ of two points. The majorization inequality~\eqref{eqn:majorization_Theorem} suggest a naive algorithm for problem~\eqref{eqn:problem_specific}: we simply evaluate the objective over set $\Phi$ and select the one which yields the lowest objective, that is 
		\begin{align}
	\min_x f(x) =\min \Big\{   \min_m f(d_m),  \min_{m_1\neq m_2}  f({d}_{m_1,m_2}) \Big\}.
	\label{eqn:problem_specific_2}
	\end{align}
	Figure~\ref{figure:example_atomic_majorizer} plots a realization of problem~\eqref{eqn:problem_specific} with $M=3$.
\end{example}

					\begin{figure}[h]
	\begin{center}
		\includegraphics[width=6cm]{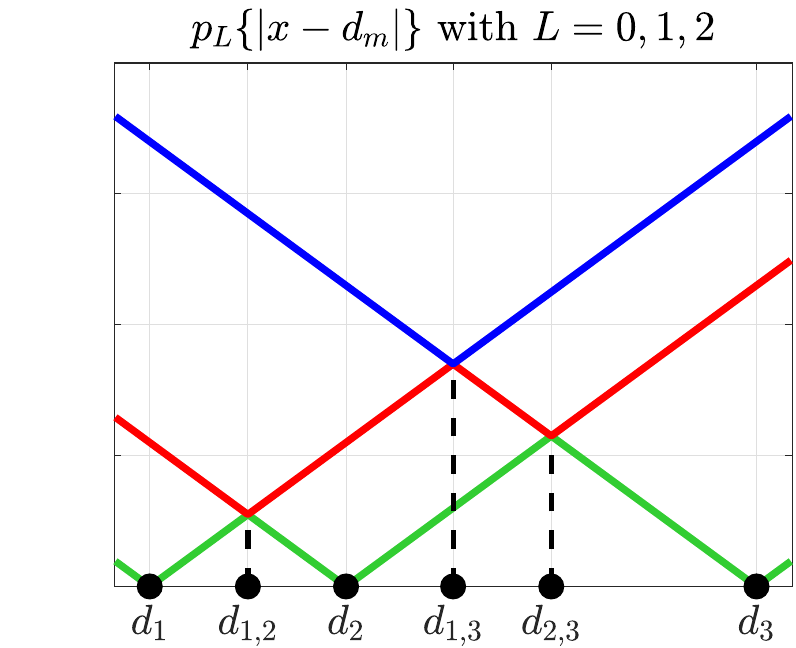}
		\caption{ Percentile objective $p_L\{ | x-d_m|\}$ with $M=3$ samples and number of outliers $L\in\{0,\dots,M-1\}$ varying: the blue, red and green curves represent the percentile function $p_L\{ | x-d_m|\}$ with $L=0,1,2$ respectively. The black points in the $x$- axis represent set $\Phi$ in~\eqref{eqn:simple_Example}. The blue function is convex because the mapping $p_{0}\{ | x-d_m|\}$ is equal to the maximum of convex functions $| x-d_m|$.	}
		\label{figure:example_atomic_majorizer}
	\end{center}
\end{figure}
\subsection{Parametrizing  Set $\Phi$}
\label{subsection: Parametrizing}
Theorem~\ref{thm:majorization_theorem} and Lemma~\ref{lemma:percentiles_composed_continuous} show that any minimizer $x^*$  of problem~\eqref{eqn:percentile_min} must belong to set $\Phi$ given by
\textcolor{black}{	\begin{align}
\Phi=& \bigcup_{m=1}^M  S_m \, \cup \,  \bigcup_{{m}=1}^M N_m  \,\, \cup   \hspace{-0.05cm}   \bigcup_{m_1<m_2} E_{m_1,m_2},
\label{eqn:majorizer_concrete_case_1}
\end{align}
with the auxiliary sets 
	\begin{align}
	S_m=&  \Big\{x:  \nexists \,\,  \nabla f_m(x)=0  \Big\},\\
	N_m=&  \Big\{x: \nexists \,\, \nabla f_{{m}}(x) \Big\}\,\\
	 E_{m_1,m_2}=&  \Big\{x:  f_{m_1}(x)=f_{m_2}(x)  \Big\},
	 \label{eqn:def_aux_sets}
\end{align}
defined in terms of atoms $f_m(x):=|y_m- \left\|x-a_m\right\|   |$.} This section derives an alternative representation of the set $\Phi$ which is sample friendly, that is, such that sampling points from $\Phi$ is computationally tractable. This enables a simple scheme to solve problem~\eqref{eqn:percentile_min} by creating a grid over $\Phi$ and simply choosing the best grid point. As seen in example~\ref{example:application_majorization}, this strategy delivers a provable minimizer of~\eqref{eqn:percentile_min} if $\Phi$ is a finite set. It turns out that set $\Phi$ given by~\eqref{eqn:majorizer_concrete_case_1} will be, in general, non-finite and even unbounded. Nonetheless, this grid approach often delivers very good estimates of the true target $x$, which tend to outperform several benchmarks in robust localization -- see section~\ref{sec:numerical_results}.
\\~\\
According to~\eqref{eqn:majorizer_concrete_case_1}, we first compute the set of stationary and non-differentiable points of atoms $f_m(x)=|y_m- \left\|x-a_m\right\|   |$. Function $f_m$ is differentiable if and only if $ x\neq a_m$ and $ \left\|x-a_m\right\| \neq y_m$. An application of the chain rule~\cite{magnus2019matrix} yields
\begin{align}
\nabla f_m(x)= -\frac{x-a_m}{\left\|x-a_m\right\|} \text{sign} (y_m- \left\|x-a_m\right\| )
\label{eqn:gradient}
\end{align}
with $\text{sign}(t)$ the sign of $t\neq 0$ ($ \text{sign}(t)=1$ if $t>0$ and  $ \text{sign}(t)=-1$ when $t<0$). Since the gradient $\nabla f_m(x)$ computed in~\eqref{eqn:gradient} is always non-zero for  $x\neq a_m$ and $\left\|x-a_m\right\| \neq y_m$ we conclude that there exists no stationary point of atom $f_m$, i.e., $ S_m=\emptyset$. Additionally, the set of non-differentiable points of atom $f_m$ is equal to
\begin{align}
  N_m=\{a_m\} \cup \{x:\left\|x-a_m\right\| = y_m\}.\enspace \enspace \enspace
  \label{eqn:computations_1}
\end{align}
Since $x$ lies in $\mathbf{R}^2$ we can easily parameterized sets $N_m$ by
\begin{align}
N_m= \{a_m\} \cup \Big\{ a_m+y_m(\cos \theta, \sin \theta): \theta\in [0,2\pi] \Big\}.
\label{eqn:circle_parametrization}
\end{align}
To parametrize the third term of~\eqref{eqn:majorizer_concrete_case_1} we fix a pair of distinct indices $m_1\neq m_2$ and analyze the set of points where the atoms $f_{m_1}$, $f_{m_2}$ yield the same value, that is,
\begin{align}
	E_{m_1,m_2}=\{x: |y_{m_1}- \left\|x-a_{m_1}\right\|   |=|y_{m_2}- \left\|x-a_{m_2} \right\|   | \}.
	\label{eqn:set_same_value}
\end{align}
 Let us assume, without loss of generality, that $y_{m_1}\geq y_{m_2}$. In order to parametrize set $E{m_1,m_2}$ we use two well know curves in plane geometry: standard ellipses and hyperbolas.
\begin{definition}(Pins-and-string Characterization of an Ellipse)
	Given constants $k,c\geq 0$ consider the  symmetric focal points $f_1:=(-c,0)$, $f_2:=(c,0)$. For $k>\left\|f_1-f_2\right\| =2c $ the standard ellipse is defined as the set of $x\in \mathbf{R}^2$ points whose additive distances to $f_1$ and to $f_2$ equals $k$, so
	\begin{align}
	\mathcal{E}(c,k):=\{x:  \left\|x-f_1\right\|+ \left\|x-f_2\right\| =k \}.
	\label{eqn:ellipse_definition}
	\end{align}
\end{definition}
\begin{definition}(Pins-and-string Characterization of an Hyperbola)
	Given constants $k,c\geq 0$ consider the  symmetric focal points $f_1:=(-c,0)$, $f_2:=(c,0)$. For $k<\left\|f_1-f_2\right\| =2c $ the standard hyperbola is defined as the set of $x\in \mathbf{R}^2$ points whose subtractive distances to $f_1$ and to $f_2$  equals $\pm k$, so
	\begin{align}
	\mathcal{H}(c,k):= \mathcal{H}_{+}(c,k) \cup \mathcal{H}_{-}(c,k),
	\label{eqn:def_hyperbola}
	\end{align}
	with the half-hyperbolas $ \mathcal{H}_{+}(c,k)$, $ \mathcal{H}_{-}(c,k)$ defined by
		\begin{align}
	\mathcal{H}_+(c,k)&:= \{x:  \left\|x-f_1\right\|- \left\|x-f_2\right\| =+k \}, 		\label{eqn:half_hyperbola_definition} \\
	\mathcal{H}_-(c,k)&:= \{x:  \left\|x-f_1\right\|- \left\|x-f_2\right\| =-k \}.
	\end{align}
\end{definition}
 It turns out that set $E{m_1,m_2}$ is closely related to standard ellipses $\mathcal{E}(c,k)$ and standard half hyperbolas $\mathcal{H}_+(c,k)$. In concrete, after carefully translating and rotating $E{m_1,m_2}$ we can get (1) a standard ellipse $\mathcal{E}(c,k)$, or (2) a standard half-hyperbola $	\mathcal{H}_+(c,k)$ or (3) their union $\mathcal{E}(c,k) \cup \mathcal{H}_+(c,k)$. The switching between these three regimes depends on the anchor-measurement data $(a_{m_1},a_{m_2},y_{m_1},y_{m_2})$. The next lemma designs a suitable translation and rotation that morph $E{m_1,m_2}$ as required. Given any vector $x$ in $\mathbf{R}^2\setminus\{0\}$ let $\arctantwo(x)\in(-\pi,\pi]$ denote the angle of vector $x$  in polar coordinates.
\begin{lemma}
	\label{lemma:rotated_translated_set}
	Given anchors $a_{m_1}$, $a_{m_2}\in \mathbf{R}^2$ and measurements $y_{m_1},y_{m_2}$ assume $y_{m_1}\geq y_{m_2}$. Then set $E_{m_1,m_2}$ is given by
 \begin{align*}
E_{m_1,m_2}=\overline{a}+R\begin{cases} \, \mathcal{E}(c,k_E) & \text{if } 2c <k_H \\
\mathcal{H}_{+}(c,k_H) &  \text{if } 2c>k_E \\
\mathcal{E}(c,k_E) \cup \mathcal{H}_{+}(c,k_H) & \text{if }k_H\leq 2c\leq k_E
\end{cases},
\label{eqn:complete_characterization}
	\end{align*}
	with $\overline{a}=(a_{m_1}+a_{m_2})/{2}$ the mean anchor position, $(c,k_E,k_H)=(\left\| a_{m_2}-\overline{a} \right\|,y_{m_1}+y_{m_2},y_{m_1}-y_{m_2})$ the constants that define the sets $\mathcal{E}(c,k_E) $ and $\mathcal{H}_{+}(c,k_H)$, and $R$ the rotation matrix associated with angle $\theta^*$ of vector $ a_{m_2}-\overline{a}$,
	\begin{align}
			R&=\begin{bmatrix}
	\cos \theta^* & -\sin \theta^* \\
	\sin \theta^* & \cos \theta^*
	\end{bmatrix},\enspace \theta^* := \arctantwo  ( a_{m_2}-\overline{a}).
	\end{align}
\end{lemma}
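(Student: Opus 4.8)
The plan is to peel the modulus in \eqref{eqn:set_same_value} apart and recognize the two resulting loci as a pins-and-string ellipse and a pins-and-string hyperbola branch, and then put them in standard position with a rigid motion. Since $|\alpha|=|\beta|$ if and only if $\alpha=\beta$ or $\alpha=-\beta$, a point $x$ belongs to $E_{m_1,m_2}$ exactly when it solves
\begin{align}
\|x-a_{m_1}\|+\|x-a_{m_2}\|=y_{m_1}+y_{m_2}=:k_E
\label{eqn:prop_sum}
\end{align}
or
\begin{align}
\|x-a_{m_1}\|-\|x-a_{m_2}\|=y_{m_1}-y_{m_2}=:k_H .
\label{eqn:prop_diff}
\end{align}
The standing hypothesis $y_{m_1}\ge y_{m_2}$ is used precisely to write the right-hand side of \eqref{eqn:prop_diff} as the nonnegative number $k_H$, so that the difference-of-distances locus is the branch nearer $a_{m_2}$ (which, after the coming change of coordinates, is $\mathcal{H}_+$ and not $\mathcal{H}_-$). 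Thus $E_{m_1,m_2}$ is the union of the $k_E$-string ellipse with foci $a_{m_1},a_{m_2}$ and of one branch of the hyperbola with the same foci and parameter $k_H$.

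Next I would normalize the foci. Because $\overline a=(a_{m_1}+a_{m_2})/2$, the shifted foci are the antipodal pair $a_{m_1}-\overline a=-v$ and $a_{m_2}-\overline a=+v$ with $v:=a_{m_2}-\overline a$ and $\|v\|=c$; and since $\theta^*=\arctantwo(v)$ we have $v=c(\cos\theta^*,\sin\theta^*)=R(c,0)$, hence $R^{\!\top}(a_{m_2}-\overline a)=(c,0)=f_2$ and $R^{\!\top}(a_{m_1}-\overline a)=(-c,0)=f_1$. Substituting $x=\overline a+Rx'$ and using orthogonality of $R$ (so $\|Rx'\pm v\|=\|x'\pm R^{\!\top}v\|$) turns \eqref{eqn:prop_sum} into $\|x'-f_1\|+\|x'-f_2\|=k_E$ and \eqref{eqn:prop_diff} into $\|x'-f_1\|-\|x'-f_2\|=k_H$, i.e. into the defining conditions \eqref{eqn:ellipse_definition} and \eqref{eqn:half_hyperbola_definition}. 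Since $x\mapsto R^{\!\top}(x-\overline a)$ is a bijection of $\mathbf{R}^2$, this yields the set identity $E_{m_1,m_2}=\overline a+R\,(\Sigma_E\cup\Sigma_H)$, where $\Sigma_E$ and $\Sigma_H$ are the solution sets in standard coordinates of $\|x'-f_1\|+\|x'-f_2\|=k_E$ and $\|x'-f_1\|-\|x'-f_2\|=k_H$ respectively.

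It then remains to identify $\Sigma_E$ and $\Sigma_H$ by comparing their parameters with the interfocal distance $2c$. From the forward and reverse triangle inequalities, $\big|\|x'-f_1\|-\|x'-f_2\|\big|\le 2c\le\|x'-f_1\|+\|x'-f_2\|$ for every $x'$; hence $\Sigma_E$ is empty when $k_E<2c$ and equals the standard ellipse $\mathcal{E}(c,k_E)$ when $k_E>2c$, while $\Sigma_H$ is empty when $k_H>2c$ and equals the standard branch $\mathcal{H}_+(c,k_H)$ when $k_H<2c$. Combining this with the ordering $0\le k_H\le k_E$ (valid for nonnegative range measurements) produces exactly three regimes: $2c<k_H$, where only $\Sigma_E$ survives and $E_{m_1,m_2}=\overline a+R\,\mathcal{E}(c,k_E)$; $2c>k_E$, where only $\Sigma_H$ survives and $E_{m_1,m_2}=\overline a+R\,\mathcal{H}_+(c,k_H)$; and $k_H\le 2c\le k_E$, where both survive and $E_{m_1,m_2}=\overline a+R\big(\mathcal{E}(c,k_E)\cup\mathcal{H}_+(c,k_H)\big)$, which is the claimed formula. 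The only delicate point is the two codimension-one boundary cases $k_E=2c$ (where $\Sigma_E$ collapses to the segment $[f_1,f_2]$) and $k_H=2c$ (where $\Sigma_H$ collapses to a closed ray): there $\mathcal{E}(c,k_E)$ and $\mathcal{H}_+(c,k_H)$ must be read through the obvious limiting convention, and being lower-dimensional these degenerate pieces are immaterial for the grid built over $\Phi$. I expect this degenerate bookkeeping, together with keeping the orientation of $R$ and the labelling of the two foci consistent, to be the only genuine friction in what is otherwise a routine reduction to standard conics.
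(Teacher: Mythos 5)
Your proposal is correct and follows essentially the same route as the paper: split $|\alpha|=|\beta|$ into the sum-of-distances and difference-of-distances loci, use the triangle and reverse triangle inequalities to decide which locus is empty in each of the three regimes relative to $2c$, and reduce to the standard conics via the isometry $x\mapsto R^{\top}(x-\overline a)$ that sends the anchors to the focal points $(\pm c,0)$. Your extra remark on the degenerate boundary cases $k_E=2c$ and $k_H=2c$ is a point the paper's proof passes over silently, but it does not change the argument.
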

\begin{proof}
     We start by opening the absolute values in definition~\eqref{eqn:def_aux_sets}, so
     \begin{align}
     E_{m_1,m_2}= E_{m_1,m_2}^{S} \cup E_{m_1,m_2}^{O},
     \end{align} 
     with $$E_{m_1,m_2}^{O}=\{x: \left\|x-a_{m_1}\right\|+\left\|x-a_{m_2} \right\|   =y_{m_1}+y_{m_2}  \}$$ and $$E_{m_1,m_2}^{S}=\{x: \left\|x-a_{m_1}\right\|-\left\|x-a_{m_2} \right\|   =y_{m_1}-y_{m_2}  \}.$$ Set $E_{m_1,m_2}^{O}$ assumes that the absolute values in $E_{m_1,m_2}$ have opposite signs (hence the subscript O of opposite) while $E_{m_1,m_2}^{S}$ assumes equal signs (hence the subscript S of same). To proceed we consider three separate cases: $2c>k_E$, $2c<k_H$  and $k_H\leq 2c\leq k_E$. 
     For $2c>k_E$ there exists no target position $x$ which yields opposite signs on $E_{m_1,m_2}$, that is, $E_{m_1,m_2}^{O}=\emptyset$. Indeed assume there exists an $x$ with
     \begin{align}
     \left\|x-a_{m_1}\right\|+\left\|x-a_{m_2} \right\|   =y_{m_1}+y_{m_2}.
     \end{align}
     Since $2c=\left\|a_{m_1}-a_{m_2} \right\|$ we may bound this constant by
     \begin{align}
     2c=\left\|a_{m_1}-a_{m_2} \right\| &\leq  \left\|x-a_{m_1}\right\|+\left\|x-a_{m_2}\right\|\\
                                        &=y_{m_1}+y_{m_2}\\
                                        &=k_E.
                                        \label{eqn:contradiction_sign_1}
     \end{align}
     But~\eqref{eqn:contradiction_sign_1} contradicts  $2c>k_E$. Hence $E_{m_1,m_2}^{O}=\emptyset$ when  $2c>k_E$. A similar reasoning shows that the absolute values in $E_{m_1,m_2}$ must have opposite signs when $2c<k_H$, that is, $E_{m_1,m_2}^{S}=\emptyset$ for $2c<k_H$. Hence
      \begin{align}
    E_{m_1,m_2}=\begin{cases} \,  E_{m_1,m_2}^{O} & \text{if } 2c <k_H \\
   E_{m_1,m_2}^{S} &  \text{if } 2c>k_E \\
    E_{m_1,m_2}^{O} \cup E_{m_1,m_2}^{S} & \text{if }k_H\leq 2c\leq k_E
     \end{cases}.
     \label{eqn:complete_characterization_aux_1}
     \end{align}
     To finish the proof we show that sets $E_{m_1,m_2}^{O}$ and $E_{m_1,m_2}^{S}$ are equal to a standard ellipse $\mathcal{E}(c,k_E)$ and a standard hyperbola $\mathcal{H}^{+}(c,k_H)$ (respectively) after a translation and rotation. Using the translation vector $\overline{a}$ and rotation matrix $R$ leads to
     \begin{align}
     E_{m_1,m_2}^{S}&=\{x: \left\|x-a_{m_1}\right\|-\left\|x-a_{m_2} \right\|   =y_{m_1}-y_{m_2}  \}\\
     &\begin{aligned}=\overline{a}+ R\,\{\tilde{x}: &\left\|\tilde{x}-\tilde{a}_{m_1}\right\|-\left\|\tilde{x}-\tilde{a}_{m_2} \right\|   =y_{m_1}-y_{m_2}  \},
     \end{aligned}
     \label{eqn:complete_characterization_aux_2}
     \end{align}
     with the change of variable $\tilde{x}=R^T\,(x-\overline{a})$ for any point $x$. Result~\eqref{eqn:complete_characterization_aux_2} follows because $R$ defines an isometry, that is,
     \begin{align}
     \forall\,\phi:\enspace 
 \left\|x-\phi\right\|&= \left\|R^T\,(x-\overline{a})-R^T\,(\phi-\overline{a})\right\|= \left\|\tilde{x}-\tilde{\phi}\right\|. \nonumber
     \end{align}
     Simple computations show that the change of variable $x\mapsto R^T\,(x-\overline{a})$ maps the anchors $a_{m_1}$ and $a_{m_2}$ to the focal points $(-c,0)$ and $(c,0 )$ (this is shown in figure~\ref{figure:set_Phi_cases} since the transformation $x\mapsto Rx$ rotates the input $x$ counter-clockwise by $ \theta^*$ radians). Hence set~\eqref{eqn:complete_characterization_aux_2} is equal to $\overline{a}+R\,\mathcal{E}(c,k_E)$. Doing the same change of variables in $E_{m_1,m_2}^{O}$ leads to $E_{m_1,m_2}^{O}=\overline{a}+R\,\mathcal{H}_{+}(c,k_H)$ as desired.
\end{proof}
\begin{remark}
	The quantities  $(\overline{a},c,k_E,k_H,R)$ introduced in Lemma~\ref{lemma:rotated_translated_set} depend on the indices $m_1,m_2$ that define the set $E_{m_1,m_2}$. So, in full rigour, these quantities should also be indexed by indices $m_1,m_2$ in order to be fully identifiable. Since this makes the notation denser we choose not to display the dependency of $(\overline{a},c,k_E,k_H,R)$ on  $m_1,m_2$. So, from now on, the symbols $(\overline{a},c,k_E,k_H,R)$ are always associated with the set $E_{m_1,m_2}$ for some pair $(m_1,m_2)$ understood implicitly.
\end{remark}
\begin{figure}[h]
	\centering
	\begin{subfigure}[t]{0.23\textwidth}
		\centering
		\includegraphics[width=4cm,height=3.5cm]{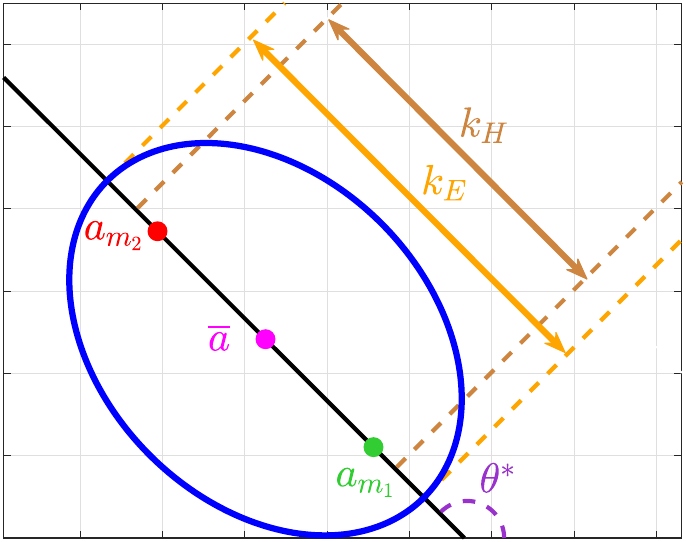}
		\caption{ Close-by anchors $2c<k_H$. }
				\label{figure:ellipse_set}
	\end{subfigure}%
	~ 
	\begin{subfigure}[t]{0.23\textwidth}
		\centering
		\includegraphics[width=4cm,height=3.5cm]{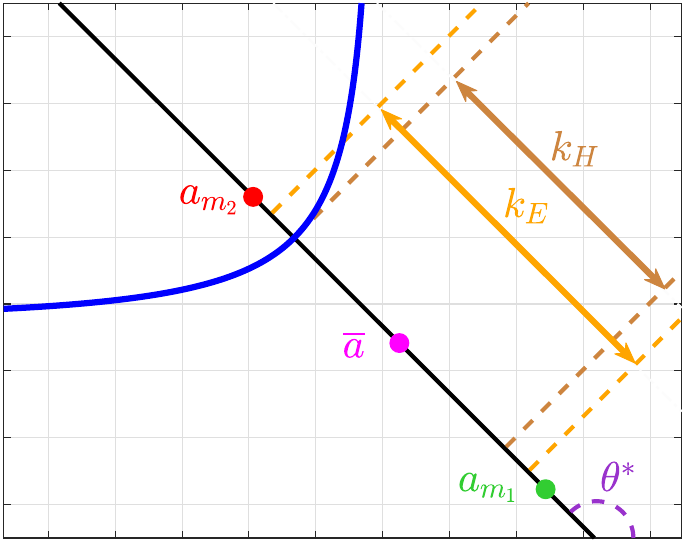}
		\caption{ Far-away anchors $2c>k_E$. }
				\label{figure:hyperbola_set}
		\vspace{0.3cm}
	\end{subfigure}
	\begin{subfigure}[t]{0.5\textwidth}
	\centering
	\includegraphics[width=5cm]{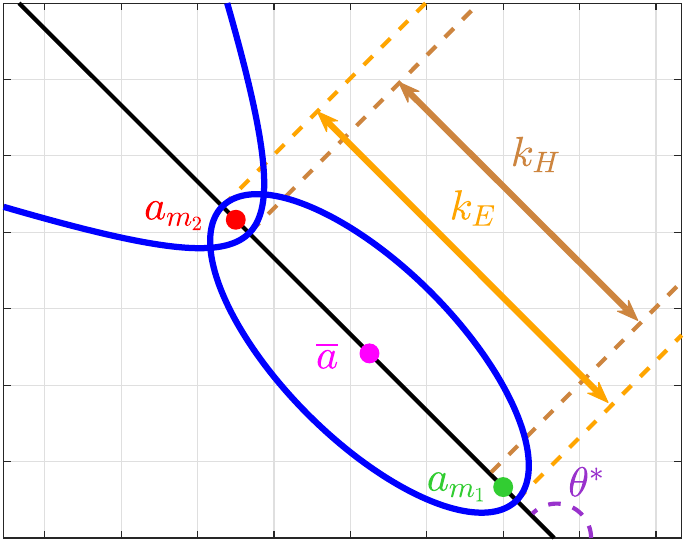}
	\caption{ Mid-range anchors $k_H\leq 2c\leq k_E$. }
		\label{figure:ellipse_plus_hyperbola_set}
\end{subfigure}
\caption{ Geometric interpretation of Lemma~\ref{lemma:rotated_translated_set}. The black line
	represents the set $\overline{a}+ (a_{m_2}-\overline{a})\, \mathbf{R}$; so a set centered at $\overline{a}$ and expanding linearly in direction $a_{m_2}-\overline{a}$; this set allows to plot the angle of direction $a_{m_2}-\overline{a}$, i.e., angle  $\theta^*$ (in purple). In all figures we plot set $E_{m_1,m_2}$ (in blue) with both constants $k_E$ (in orange) and $k_H$ (in brown) fixed. In figure~\ref{figure:set_Phi_cases} (c) the distance between anchors $a_{m_1}$ (in green) and $a_{m_2}$ (in red) insures that $k_H\leq 2c\leq k_E$. Figure~\ref{figure:set_Phi_cases} (a) achieves $2c<k_H$ by moving both anchors inwards towards the center $\overline{a}$; the anchors move outwards to satisfy $2c>k_E$ in figure~\ref{figure:set_Phi_cases} (b).}
	\label{figure:set_Phi_cases}
\end{figure}
Figure~\ref{figure:set_Phi_cases} plots the three generic forms of $E_{m_1,m_2}$. A consequence of Lemma~\ref{lemma:rotated_translated_set} is that the task of sampling from $E_{m_1,m_2}$ reduces to sampling from an ellipse~\eqref{eqn:ellipse_definition} or an half hyperbola~\eqref{eqn:half_hyperbola_definition}. Since these parametric curves are well known in plane geometry there exist simple parametrization of both sets. An ellipse $\mathcal{E}(c,k_E)$ can be parametrized by means of the trigonometric functions $(\cos \theta, \sin \theta)$, while an hyperbola $\mathcal{H}(c,k_H)$ uses the corresponding hyperbolic functions, that is
\begin{align}
\cosh t=\frac{e^t+e^{-t}}{2}, \enspace \enspace \sinh t=\frac{e^t-e^{-t}}{2}.
\label{eqn:hyperbolic_functions}
\end{align} 
The next lemma summarizes one standard parametrization of sets $\mathcal{E}(c,k_E)$ and $\mathcal{H}(c,k_H)$ by means of trigonometric and hyperbolic functions, respectively.
\begin{lemma}
	\label{lemma:trigonometric_hyperbolic}
	The standard ellipse $\mathcal{E}(c,k_E)$ and standard half hyperbolas $\mathcal{H}_+(c,k_H)$, $\mathcal{H}_-(c,k_H)$ can be parametrized by
	\begin{align}
	\mathcal{E}(c,k_E)&=  \Big\{ \big( \frac{k_E}{2}\cos \theta, \sqrt{\frac{k_E^2}{4}-c^2} \sin \theta\big): \theta \in [0,2\pi] \Big\}, \enspace \enspace \\
	\label{eqn:parametric_H_+}
	\mathcal{H}_+(c,k_H)&=  \Big\{ \big(\frac{k_H}{2}\cosh t, \sqrt{c^2-\frac{k_H^2}{4}} \sinh t\big): t \in \mathbf{R} \Big\},\\
		\mathcal{H}_-(c,k_H)&=  \Big\{  \big(-\frac{k_H}{2}\cosh t, \sqrt{c^2-\frac{k_H^2}{4}} \sinh t\big): t \in \mathbf{R} \Big\}.
	\end{align}
\end{lemma}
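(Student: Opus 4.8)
The plan is to translate each pins-and-string definition into its classical Cartesian (implicit) equation and then check that the proposed trigonometric / hyperbolic maps are bijections onto the resulting curve.

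I would begin with the ellipse: take $x=(x_1,x_2)\in\mathcal{E}(c,k_E)$, so $\|x-f_1\|+\|x-f_2\|=k_E$ with $f_1=(-c,0)$, $f_2=(c,0)$, $k_E>2c$. Isolating $\|x-f_1\|=k_E-\|x-f_2\|$ (its right side is nonnegative since $\|x-f_2\|\le k_E$), squaring and simplifying gives $\|x-f_2\|=\tfrac{k_E}{2}-\tfrac{2c}{k_E}x_1$; squaring once more and collecting terms produces
\begin{align}
\frac{x_1^2}{(k_E/2)^2}+\frac{x_2^2}{\,k_E^2/4-c^2\,}=1,
\label{eqn:ell_cart_pf}
\end{align}
where $k_E^2/4-c^2>0$ exactly because $k_E>2c$. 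Both squarings are reversible — the first as noted, the second because~\eqref{eqn:ell_cart_pf} forces $|x_1|\le k_E/2$, hence $\tfrac{k_E}{2}-\tfrac{2c}{k_E}x_1\ge\tfrac{k_E}{2}-c>0$ — so $\mathcal{E}(c,k_E)$ coincides with the solution set of~\eqref{eqn:ell_cart_pf}. Since $(u,v)$ solves $u^2+v^2=1$ iff $(u,v)=(\cos\theta,\sin\theta)$ for some $\theta\in[0,2\pi]$, setting $u=x_1/(k_E/2)$ and $v=x_2/\sqrt{k_E^2/4-c^2}$ yields the first claimed parametrization.

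Next I would handle the half-hyperbolas. Running the same manipulation on $\|x-f_1\|-\|x-f_2\|=k_H$ with $k_H<2c$ gives
\begin{align}
\frac{x_1^2}{(k_H/2)^2}-\frac{x_2^2}{\,c^2-k_H^2/4\,}=1,
\label{eqn:hyp_cart_pf}
\end{align}
with $c^2-k_H^2/4>0$ because $k_H<2c$, the squarings again reversible using the triangle inequality $2c=\|f_1-f_2\|$ together with $k_H<2c$. Moreover $\|x-f_1\|-\|x-f_2\|=k_H>0$ forces $x$ to be strictly closer to $f_2=(c,0)$ than to $f_1=(-c,0)$, i.e.\ $x_1>0$; hence $\mathcal{H}_+(c,k_H)$ is exactly the right branch $\{x_1>0\}$ of~\eqref{eqn:hyp_cart_pf}, on which $x_1\ge k_H/2>0$. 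For the map $t\mapsto\big(\tfrac{k_H}{2}\cosh t,\ \sqrt{c^2-k_H^2/4}\,\sinh t\big)$, the identity $\cosh^2 t-\sinh^2 t=1$ shows the image satisfies~\eqref{eqn:hyp_cart_pf}, and $\tfrac{k_H}{2}\cosh t\ge\tfrac{k_H}{2}>0$ keeps it in the right branch; conversely, any $(x_1,x_2)$ on that branch satisfies $x_1/(k_H/2)\ge 1$, so $\cosh t=x_1/(k_H/2)$ fixes $|t|$ and $\sinh t=x_2/\sqrt{c^2-k_H^2/4}$ fixes the sign of $t$, consistently by~\eqref{eqn:hyp_cart_pf}. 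This proves the stated parametrization of $\mathcal{H}_+(c,k_H)$; the case $\mathcal{H}_-(c,k_H)$ is identical, except $\|x-f_1\|-\|x-f_2\|=-k_H<0$ now puts $x$ on the left branch $\{x_1<0\}$, so one replaces $\tfrac{k_H}{2}\cosh t$ by $-\tfrac{k_H}{2}\cosh t$.

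The arithmetic is routine; the step needing genuine care is verifying that the two successive squarings introduce no extraneous points, so that~\eqref{eqn:ell_cart_pf} and~\eqref{eqn:hyp_cart_pf} are equivalent to, not merely implied by, the pins-and-string definitions — and this is exactly where the hypotheses $k_E>2c$ and $k_H<2c$, plus the triangle inequality for the focal separation $2c=\|f_1-f_2\|$, are used. The only other bookkeeping is matching each connected branch of~\eqref{eqn:hyp_cart_pf} with $\mathcal{H}_+$ or $\mathcal{H}_-$ through the sign of $\|x-f_1\|-\|x-f_2\|$, which is immediate from the perpendicular-bisector description of the half-planes $\{x_1>0\}$ and $\{x_1<0\}$.
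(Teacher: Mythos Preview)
The paper does not actually prove this lemma: it is stated as a standard fact from plane geometry (``Since these parametric curves are well known in plane geometry there exist simple parametrization of both sets. The next lemma summarizes one standard parametrization\ldots''), with no accompanying proof environment. Your argument is therefore not competing with any proof in the paper; it simply supplies the omitted textbook verification.

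Your proof is correct. You convert the pins-and-string constraints into the canonical Cartesian equations by two successive squarings, check that the hypotheses $k_E>2c$ and $k_H<2c$ make those squarings reversible (so no spurious solutions are introduced), identify each branch of the hyperbola via the sign of $x_1$, and then invoke the elementary bijections $(\cos\theta,\sin\theta)\leftrightarrow\{u^2+v^2=1\}$ and $(\pm\cosh t,\sinh t)\leftrightarrow\{u^2-v^2=1\}$. One minor remark: your hyperbola derivation implicitly divides by $k_H$, so the degenerate case $k_H=0$ (where $\mathcal{H}_+=\mathcal{H}_-$ is the perpendicular bisector $\{x_1=0\}$) should be noted separately; the claimed parametrization still holds there since $\tfrac{k_H}{2}\cosh t\equiv 0$ and $\sqrt{c^2}\,\sinh t$ sweeps $\mathbf{R}$.
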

Lemmas~\ref{lemma:rotated_translated_set} and~\ref{lemma:trigonometric_hyperbolic} allow to efficiently sample set $E_{m_1,m_2}$ defined in~\eqref{eqn:set_same_value}. Combining this result with~\eqref{eqn:circle_parametrization} we express $\Phi$ in terms of computational tractable regions in $2$-D space: singletons, circles, ellipses and half-hyperbolas. The next section tackles the unboundedness of the half-hyperbolas.
\subsection{Dealing with the unboundedness of $\Phi$}
\label{subsection:unbounded_Phi}
As seen in Figure~\ref{figure:set_Phi_cases}, set $E_{m_1,m_2}$ is unbounded whenever $2c\geq k_H$ due to the half-hyperbolic component $\mathcal{H}_+(c_H,k)$. In this case we may sample $E_{m_1,m_2}$ by first discretizing the half hyperbola $\mathcal{H}_+(c,k_H)$ and then applying the affine transformation $x\mapsto \bar a + R\,x$. But discretizing a generic half hyperbola $\mathcal{H}_+(c,k_H)$ is computational intractable since the parameter $t$, in~\eqref{eqn:parametric_H_+}, can take any value in $\mathbf{R}$. Let us note, however, that this issue can be easily fixed since our half-hyperbola $\mathcal{H}_+(c,k_H)$ is not completely generic; it parametrizes set $\Phi$ that majorizes the minimizer set of problem~\eqref{eqn:percentile_min}. The objective function $x\mapsto p_{L} \{ | y_{{m}} -\left\|x-a_{m}\right\| | \}$ is clearly coercive since $ \left\|x\right\|\rightarrow \infty$ implies that  the deviation mapping $| y_{{m}} -\left\|x-a_{m}\right\| | $ grows unbounded for each $m$. So, clearly, a minimizer $x^*$ of~\eqref{eqn:percentile_min} cannot have a arbitrarily high norm $ \left\|x^*\right\| $. This intuitively prevents the parameter $t$ in $\mathcal{H}_+(c,k_H)$ to grow unbounded, since this would make both hyperbolic functions in~\eqref{eqn:hyperbolic_functions} also grow unbounded, that is
\begin{align*}
\text{min}\big\{|\cosh t|,|\sinh t|\big\}\rightarrow +\infty,
\end{align*} 
when $|t|\rightarrow +\infty$.
~\\
To formalize the previous argument we start  by designing a tractable family of upper bounds on the norm of a generic minimizer $x^*$ of~\eqref{eqn:percentile_min}.
\begin{lemma}
	\label{lemma:unbounded}
	Let $\hat{x}\in \mathbf{R}^2$ be arbitrary. Then, the solution set of~\eqref{eqn:percentile_min} is bounded since any minimizer $x^*$ of~\eqref{eqn:percentile_min} respects 
	\begin{align}
	\left\|x^*\right\| \leq  B(\hat{x}),
	\label{eqn:bound_norm_inequality}
		\end{align}
where
		\begin{align}
		B(\hat{x})=p_{L} \{ \big| y_{{m}} -\left\|\hat{x}-a_{m}\right\| \big| \} + \max_m \{   \left\|a_{m}\right\|+y_m   \}.\enspace 
		\label{eqn:bound_norm}
		\end{align}

\end{lemma}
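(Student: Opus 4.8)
The plan is to convert the optimality of $x^{*}$ into a pointwise bound on one carefully chosen atom, and then close the argument with the triangle inequality. First I would use that $x^{*}$ minimizes the objective of~\eqref{eqn:percentile_min} while $\hat{x}$ is merely feasible, which gives
\begin{align}
p_{L}\big\{\,\big| y_{m}-\left\|x^{*}-a_{m}\right\|\big|\,\big\}\;\leq\;p_{L}\big\{\,\big| y_{m}-\left\|\hat{x}-a_{m}\right\|\big|\,\big\}.
\end{align}
The right-hand side is exactly the first term of $B(\hat{x})$ in~\eqref{eqn:bound_norm}; I will denote it $\rho$, so that $B(\hat{x})=\rho+\max_{m}\{\left\|a_{m}\right\|+y_{m}\}$.

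Next I would run a counting argument on the percentile. By the ordering~\eqref{eqn:sort_elements}, the value $p_{L}(z)$ of a vector $z\in\mathbf{R}^{M}$ is exceeded by at most $L$ of the entries of $z$, so at least $M-L$ entries are $\leq p_{L}(z)$. Applying this with $z_{m}=\big| y_{m}-\left\|x^{*}-a_{m}\right\|\big|$ shows that at least $M-L$ indices $m$ satisfy $\big| y_{m}-\left\|x^{*}-a_{m}\right\|\big|\leq\rho$. Since $L\in\{0,\dots,M-1\}$ we have $M-L\geq 1$, hence there exists at least one index $m_{0}$ with $\big| y_{m_{0}}-\left\|x^{*}-a_{m_{0}}\right\|\big|\leq\rho$.

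Finally, from that single inequality I would extract $\left\|x^{*}-a_{m_{0}}\right\|\leq y_{m_{0}}+\rho$, and then by the triangle inequality
\begin{align}
\left\|x^{*}\right\|\leq\left\|a_{m_{0}}\right\|+\left\|x^{*}-a_{m_{0}}\right\|\leq\left\|a_{m_{0}}\right\|+y_{m_{0}}+\rho\leq\max_{m}\{\left\|a_{m}\right\|+y_{m}\}+\rho=B(\hat{x}),
\end{align}
which is~\eqref{eqn:bound_norm_inequality}. Since this bound is uniform over all minimizers, the solution set of~\eqref{eqn:percentile_min} sits inside the closed ball of radius $B(\hat{x})$ about the origin, hence is bounded. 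The only mildly delicate step is the percentile-counting part together with the use of $L\leq M-1$ to guarantee that a controlled index $m_{0}$ exists at all; everything else is elementary, and no coercivity estimate beyond this single-index argument is needed.
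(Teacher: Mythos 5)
Your proof is correct, and it reaches the bound by a somewhat different route than the paper. The paper argues by contradiction: it assumes $\left\|x^*\right\|>B(\hat{x})$, rewrites this as $\left\|x^*\right\|-\left\|a_m\right\|-y_m>\rho$ for \emph{every} $m$, and then uses the reverse triangle inequality componentwise together with the monotonicity of $p_L$ to conclude that $p_{L}\{|y_m-\left\|x^*-a_m\right\||\}\geq p_{L}\{\left\|x^*\right\|-\left\|a_m\right\|-y_m\}>\rho$, contradicting optimality of $x^*$. You instead argue directly: from optimality you get $p_{L}\{|y_m-\left\|x^*-a_m\right\||\}\leq\rho$, and then you exploit the order-statistic structure of the percentile --- at least $M-L\geq 1$ entries lie at or below $p_L(z)$ --- to extract a single controlled index $m_0$ with $|y_{m_0}-\left\|x^*-a_{m_0}\right\||\leq\rho$, from which the triangle inequality finishes the job. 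Your counting step is sound given the paper's convention $L\in\{0,\dots,M-1\}$ and the ordering in~\eqref{eqn:sort_elements}. What each approach buys: yours needs only the weak fact that $p_L(z)$ upper-bounds some entry of $z$, avoids the contradiction framing, and isolates exactly where the hypothesis $L\leq M-1$ is used; the paper's version instead leans on the monotonicity of $p_L$ under componentwise domination, which treats all indices symmetrically and yields the slightly stronger intermediate statement that the objective at $x^*$ would strictly exceed that at $\hat{x}$. Both give the same constant $B(\hat{x})$, so neither is sharper; yours is arguably the more economical argument.
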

\begin{proof}
	Assume there exists a minimizer $x^*$of~\eqref{eqn:percentile_min} with
	\begin{align}
		\left\|x^*\right\| > p_{L} \{ | y_{{m}} -\left\|\hat{x}-a_{m}\right\| | \} + \max_m \{   \left\|a_{m}\right\|+y_m   \}.\enspace 
	\label{eqn:bound_norm_inequality_aux_1}
	\end{align}
	To get a contradiction we first note that~\eqref{eqn:bound_norm_inequality_aux_1} is equivalent to
		\begin{align}
	\left\|x^*\right\|- \left\|a_{m}\right\|-y_m  > p_{L} \{ | y_{{m}} -\left\|\hat{x}-a_{m}\right\| | \},\enspace \forall \,m.
	\label{eqn:bound_norm_inequality_aux_2}
	\end{align}
	Using~\eqref{eqn:bound_norm_inequality_aux_2} we bound the objective $ p_{L} \{ | y_{{m}} -\left\|x^*-a_{m}\right\| | \}$ by
	\begin{align}
	p_{L} \{ \big| y_{{m}} -\left\|x^*-a_{m}\right\| \big| \}
	& \geq 	p_{L} \{\left\|x^*\right\|- \left\|a_{m}\right\|-y_m \}\\
	&>p_{L} \{ | y_{{m}} -\left\|\hat{x}-a_{m}\right\| | \}.
	\label{eqn:bound_norm_inequality_aux_3}
	\end{align}
	Result~\eqref{eqn:bound_norm_inequality_aux_3} says that $x^*$ yields a strictly larger objective than $\hat{x}$. This is impossible because $x^*$ is a minimizer of~\eqref{eqn:percentile_min}.
\end{proof}
Assume now that $E_{m_1,m_2}$ is unbounded because of the half-hyperbolic component $\mathcal{H}_+(c_H,k)$. Sampling 
$E_{m_1,m_2}$ is computationally intractable since the parameter $t$ in~\eqref{eqn:parametric_H_+} can take any value in $\mathbf{R}$. A natural idea to approximate $\mathcal{H}_+(c,k_H) $ is to bound the parameter $t$ to a finite interval, say $t\in [-U,U]$, and consider 
\begin{align}
\mathcal{H}^T_+(c,k_H)&=\Big\{ ( \frac{k_H}{2}\cosh t, \sqrt{c^2-\frac{k_H^2}{4}} \sinh t): |t|\leq U \Big\}.\enspace \enspace
\label{eqn:truncated_half_hyperbola}
\end{align}
The subscript $T$ in $\mathcal{H}^T_+(c,k_H)$ stands for truncated, since $\mathcal{H}^T_+(c,k_H)$ is a truncated version of $\mathcal{H}_+(c,k_H)$. Intuitively, high values of $U$ make  $\mathcal{H}^T_+(c_H,k)$ a good approximation of the half-hyperbola $\mathcal{H}_+(c_H,k)$. Hence, it is plausible to bound the minimizers of~\eqref{eqn:percentile_min} by considering a truncated version of $E_{m_1,m_2}$ where set $\mathcal{H}_+(c_H,k)$ is substituted by its truncated version $\mathcal{H}^T_+(c_H,k)$. Let $E_{m_1,m_2}^T$ denote such a set, that is
\begin{align}
E_{m_1,m_2}^T&=\bar{a}+R\,\begin{cases} \, \mathcal{E}(c,k_E) & \text{if } 2k <k_H. \\
\mathcal{H}^T_+(c,k_H) &  \text{if } 2k >k_E\\
\mathcal{E}(c,k_E) \cup \mathcal{H}^T_+(c,k_H) & \text{if } k_H\leq 2k\leq k_E.
\end{cases}
\label{eqn:majorizer_truncated_elements}
\end{align}
In general, the bound $U$ appearing in  $\mathcal{H}^T_+(c,k_H)$  will depend on the data $(a_{m_1},a_{m_2},y_{m_1},y_{m_2})$ defining set $E_{m_1,m_2}$. But, for ease of notation, we choose to not represent this dependency. Set $E_{m_1,m_2}^T$ is now computational tractable, in the sense that we can efficiently sample the truncated half hyperbola $ \mathcal{H}^T_+(c_H,k)$ for $t\in[-U,U]$. Given sets $E_{m_1,m_2}^T$ we mimic the structure~\eqref{eqn:majorizer_concrete_case_1} of $\Phi$ and consider the truncated set $\Phi^T\subseteq \mathbf{R}^2$
\begin{align}
\Phi^T=&\bigcup_{{m}=1}^M \{a_m\} \cup \big\{ a_m+y_m(\cos \theta, \sin \theta): \theta\in [0,2\pi] \big\}.  \\
&\cup    \bigcup_{m_1=1}^{M-1}  \,\, \bigcup_{m_2=m_1+1}^{M } E_{m_1,m_2}^T,
\label{eqn:majorizer_truncated}
\end{align}
where sets $E_{m_1,m_2}$ were substituted by its truncated version $E_{m_1,m_2}^T$. Set $\Phi^T$ is now sample friendly since each set $E_{m_1,m_2}^T$ is sample friendly.  The next lemma tunes the value of $U$ such that $\Phi^T$ a valid majorizer of problem~\eqref{eqn:percentile_min}. The main insight is to use Lemma~\ref{lemma:unbounded} to bound the norm of any point in $\mathcal{H}_+(c,k_H)$.
\begin{lemma}
	\label{lemma:cheap_bounds}
	 Given any $\hat{x}\in \mathbf{R}^2$ consider the bound
	\begin{align}
	U &=\log( \hat{U} + \sqrt{\hat{U}^2-1}    ),
		\label{eqn:upper_bounds}
		\\
	 \hat{U}&=\sqrt{\frac{\big( B(\hat{x})+  c+ ||a_{m_2}||\big)^2+ (c^2-{k_H^2}/{4})}{c^2}}
	 \label{eqn:upper_bounds_hat_U}
	\end{align}
	with $B(\hat{x})$ the norm bound defined in~\eqref{eqn:bound_norm}, $(c,k_H)$ the constants of $\mathcal{H}_+(c,k_H) $ and $m_2$ the index of measurement $y_{m_2}\leq y_{m_1}$ as in Lemma~\ref{lemma:rotated_translated_set}. Let $\Phi^T$ denote the truncated version of $\Phi$ with bound~\eqref{eqn:upper_bounds}. Then $\Phi^T$ majorizes~\eqref{eqn:percentile_min}, that is
	\begin{align} 
\arg \min_x  p_{L  } \{ \big| y_{{m}} -\left\|x-a_{m}\right\| \big| \} \subseteq  & \enspace \Phi^T.
\label{eqn:majorization_Theorem_truncated}
\end{align}
\begin{proof}
	Let $x^*$ denote a minimizer of~\eqref{eqn:percentile_min}. We show that if $x^*$ belongs to an hyperbolic set $ \overline{a} + R\, \mathcal{H}_{+}(c,k_H)$ then it also belongs to its truncated version $ \overline{a} + R\, \mathcal{H}^T_{+}(c,k_H)$. 
	Assume $x^*$ belongs to $ \overline{a} + R\, \mathcal{H}_{+}(c,k_H)$ for some constants $(c,k_H)$ such that $2c\geq k_H$. In this case Lemma~\ref{lemma:trigonometric_hyperbolic} yields the parametric form
	\begin{align}
	x^*= \overline{a} + R\, \phi^*,\enspace \enspace \phi^*:=\begin{bmatrix}
	\frac{k_H}{2}\cosh t^* \\
	\sqrt{c^2-\frac{k_H^2}{4}}\sinh t^*
	\end{bmatrix}
	\end{align}
	for some $t^*\in \mathbf{R}$.  Our goal is to bound the parameter $t^*$ to the interval $[-U,U]$ such that $x^*$ belongs to the truncated set  $ \overline{a} + R\, \mathcal{H}^T_{+}(c,k_H)$. For this we use Lemma~\ref{lemma:unbounded} which already bounds  $||x^*||$ by $B(\hat{x})$ with $\hat{x}$ arbitrary. This yields
	\begin{align}
	\left\|\phi^*\right\| = \left\|R\phi^*\right\| &\leq \left\|\overline{a}+R\phi^*\right\|+ \left\|\overline{a}\right\| \\
	&= \left\|x^*\right\|+ \left\|\overline{a}\right\| \\
		&\leq B(\hat{x})+ c+ ||a_{m_2}||.
			\label{eqn:upper_bounds_aux_1}
	\end{align}
	The final equality uses definition $c=||\overline{a}-a_{m_2}||$ from Lemma~\ref{lemma:rotated_translated_set}. Squaring both sides of~\eqref{eqn:upper_bounds_aux_1} leads to
		\begin{align}
	\frac{k_H^2}{4} \cosh^2(t^*) +	\big( c^2- \frac{k_H^2}{4} \big) \sinh^2(t^*) \leq \big( B(\hat{x})+  c+ ||a_{m_2}|| \big)^2.
	\label{eqn:upper_bounds_aux_2}
	\end{align}
	The identity $\sinh^2(t^*)=\cosh^2(t^*)-1$ implies that
			\begin{align}
	\cosh(t^*)  \leq \sqrt{\frac{\big( B(\hat{x})+  c+ ||a_{m_2}||\big)^2+ (c^2-{k_H^2}/{4})}{c^2}}.\enspace \enspace
	\label{eqn:upper_bounds_aux_3}
	\end{align}
	The right hand side of~\eqref{eqn:upper_bounds_aux_3} is  greater than or equal to one since $B(\hat{x})$ is non-negative and  $c\geq k_H/2$. The inverse of the hyperbolic cosine $\cosh$ is equal to $z\mapsto \log (z+ \sqrt{z^2-1})$ when $z\in[1,+\infty)$. The inverse is an increasing function hence
				\begin{align}
	t^*  \leq \log( \hat{U} + \sqrt{\hat{U}^2-1}    ),
	\label{eqn:upper_bounds_aux_4}
	\end{align}
	with $\hat{U}$ equal to the right-hand side of~\eqref{eqn:upper_bounds_aux_3}.
\end{proof}
\end{lemma}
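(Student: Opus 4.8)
\emph{Proof proposal.} The plan is to bootstrap the untruncated majorization of Theorem~\ref{thm:majorization_theorem} (valid here by Lemma~\ref{lemma:percentiles_composed_continuous}) with the norm bound of Lemma~\ref{lemma:unbounded}, and then verify that truncating the half-hyperbolic pieces at the level $U$ of~\eqref{eqn:upper_bounds} discards no minimizer. First I would invoke Theorem~\ref{thm:majorization_theorem} and Lemma~\ref{lemma:percentiles_composed_continuous} to place any minimizer $x^*$ of~\eqref{eqn:percentile_min} inside the full set $\Phi$ of~\eqref{eqn:majorizer_concrete_case_1}; thus $x^*$ is an anchor $a_m$, or lies on a circle $\{x:\|x-a_m\|=y_m\}$, or lies in some $E_{m_1,m_2}$ with $m_1<m_2$. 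The anchor and circle components appear verbatim in $\Phi^T$ (compare~\eqref{eqn:majorizer_truncated} with~\eqref{eqn:majorizer_concrete_case_1}), so the only case needing work is $x^*\in E_{m_1,m_2}$.

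Next I would cut $E_{m_1,m_2}$ down to its unbounded piece. By Lemma~\ref{lemma:rotated_translated_set}, $E_{m_1,m_2}=\overline a + R(\,\cdot\,)$ where the argument is a standard ellipse $\mathcal{E}(c,k_E)$, a standard half-hyperbola $\mathcal{H}_+(c,k_H)$, or their union; the same $\overline a,R,c,k_E,k_H$ reappear in $E_{m_1,m_2}^T$ of~\eqref{eqn:majorizer_truncated_elements}, with only $\mathcal{H}_+(c,k_H)$ replaced by its truncation $\mathcal{H}_+^T(c,k_H)$. Since $\mathcal{E}(c,k_E)$ is bounded and left untouched in $\Phi^T$, it suffices to assume $x^*\in \overline a + R\,\mathcal{H}_+(c,k_H)$ with $2c\ge k_H$ and show $x^*\in \overline a + R\,\mathcal{H}_+^T(c,k_H)$. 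Using the parametrization of Lemma~\ref{lemma:trigonometric_hyperbolic}, write $x^*=\overline a + R\phi^*$ with $\phi^*=(\tfrac{k_H}{2}\cosh t^*,\ \sqrt{c^2-k_H^2/4}\,\sinh t^*)$ for some $t^*\in\mathbf{R}$; the target reduces to bounding $|t^*|$ by $U$.

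The heart of the argument is a short chain of inequalities. Since $R$ is an isometry, $\|\phi^*\|=\|R\phi^*\|\le \|\overline a + R\phi^*\|+\|\overline a\|=\|x^*\|+\|\overline a\|$; Lemma~\ref{lemma:unbounded} gives $\|x^*\|\le B(\hat x)$, and $\|\overline a\|\le \|\overline a-a_{m_2}\|+\|a_{m_2}\|=c+\|a_{m_2}\|$, so $\|\phi^*\|\le B(\hat x)+c+\|a_{m_2}\|$. Squaring and inserting the coordinates of $\phi^*$ yields $\tfrac{k_H^2}{4}\cosh^2 t^* + (c^2-\tfrac{k_H^2}{4})\sinh^2 t^* \le (B(\hat x)+c+\|a_{m_2}\|)^2$; substituting $\sinh^2 t^*=\cosh^2 t^*-1$ collapses the left side to $c^2\cosh^2 t^* - (c^2-k_H^2/4)$, so $\cosh^2 t^*\le \hat U^2$ with $\hat U$ as in~\eqref{eqn:upper_bounds_hat_U}. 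Because $2c\ge k_H$ and $B(\hat x)\ge 0$ we have $\hat U\ge 1$, hence $\cosh t^*\le \hat U$; as $\cosh$ is even and $z\mapsto\log(z+\sqrt{z^2-1})$ is the increasing inverse of $\cosh$ on $[1,\infty)$, this gives $|t^*|\le \log(\hat U+\sqrt{\hat U^2-1})=U$. Therefore $x^*\in \overline a+R\,\mathcal{H}_+^T(c,k_H)\subseteq E_{m_1,m_2}^T\subseteq \Phi^T$, which with the anchor and circle cases establishes~\eqref{eqn:majorization_Theorem_truncated}.

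I expect no genuine obstacle, only bookkeeping: (i) confirming $\hat U\ge 1$ so the $\mathrm{arccosh}$ step is legitimate, which is exactly where $c\ge k_H/2$ and $B(\hat x)\ge 0$ are used; (ii) noting the truncation must be two-sided, i.e.\ $-U\le t^*\le U$, which is automatic from $\cosh$ being even; and (iii) keeping in mind that $U,c,k_H,a_{m_2}$ are all attached to the particular pair $(m_1,m_2)$, so each $E_{m_1,m_2}^T$ carries its own truncation level. The argument uses nothing beyond Lemmas~\ref{lemma:rotated_translated_set}--\ref{lemma:unbounded}.
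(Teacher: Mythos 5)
Your proposal is correct and follows essentially the same route as the paper's proof: reduce to the half-hyperbolic piece, parametrize via Lemma~\ref{lemma:trigonometric_hyperbolic}, bound $\left\|\phi^*\right\|$ through Lemma~\ref{lemma:unbounded} and the triangle inequality, then square, apply $\sinh^2 t=\cosh^2 t-1$, and invert $\cosh$. You are in fact slightly more careful than the paper on two small points --- making explicit that $\left\|\overline{a}\right\|\leq c+\left\|a_{m_2}\right\|$ is a triangle-inequality step rather than an equality, and that evenness of $\cosh$ upgrades $t^*\leq U$ to the needed two-sided bound $|t^*|\leq U$ --- but these are bookkeeping refinements of the same argument.
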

The final method for problem~\eqref{eqn:percentile_min} is summarized in Algorithm~\ref{alg:RPTE}. As in example~\ref{example:application_majorization}, the idea is to sample set $\Phi$ and choose the best grid point according to the percentile objective.
  \begin{algorithm}[htbp]
	\caption{Robust Percentile Target Estimator - RPTE}\label{alg:RPTE}
	\begin{enumerate}
		\item Input Parameters: number of outliers $L$, position of $2$-D anchors $\{a_m\}_{m=1}^M \subseteq \mathbf{R}^2$, measurements $\{y_m\}_{m=1}^M$ and number of grid points $G\geq 2$ to discretize $\Phi^T\subseteq \mathbf{R}^2$.
		\vspace{0.3cm}
		\item Discretize the domains $[0,2\pi]$ and $[-U,U]$ by creating $G$ uniformly spaced points for each interval. In concrete
		\begin{align}
		\theta_g= 2\pi \frac{g}{ G-1},\enspace  \enspace t_g= -U+2U\frac{g}{ G-1},
		\end{align}
		with index $g=0,\dots,G-1$.
		\vspace{0.3cm}
		\item \label{alg:step3}  Use the grid points $(\theta_g,t_g)$  to discretize the circunferences, ellipses and half-hyperbolas in~\eqref{eqn:majorizer_truncated}.  This yields a discrete version of majorizer $\Phi^T$ denoted as $\Phi^{T}_D$.
		\vspace{0.2cm}
		\item \label{alg:step4} Choose the grid point $\hat{x}\in \Phi^T_D $ with the lowest objective
		\begin{align}
		\hat{x} \in \min_{x\in\Phi^T_D } p_{L} \{ | y_{{m}} -\left\|x-a_{m}\right\| | \}.
		\end{align}
	\end{enumerate}
\end{algorithm}
\section{Numerical Results}
\label{sec:numerical_results}
In this section we benchmark our method with four state-of-the art algorithms in target localization~\cite{Becks_paper_2008},~\cite{Target_Tracking_by_GD_2021},~\cite{Reweighted_LS_2018},~\cite{STRONG_2021}. The first two methods~\cite{Becks_paper_2008},~\cite{Target_Tracking_by_GD_2021} are based on non-robust formulations of the problem, that is, formulations that ignore the existence of outlier measurements. We include these methods to highlight the importance of robust approaches. The  algorithms in~\cite{Reweighted_LS_2018},~\cite{STRONG_2021} optimize robust formulations of the problem.
The benchmark estimates are denoted by $\hat{x}_{\text{SR-LS}}$~\cite{Becks_paper_2008}, $\hat{x}_{\text{GD}}$~\cite{Target_Tracking_by_GD_2021} , $\hat{x}_{\text{Huber}}$~\cite{STRONG_2021} and $\hat{x}_{\text{SR-IRLS}}$~\cite{Reweighted_LS_2018}. Our estimate, denoted as $\hat{x}_{\text{RPTE}}$, is computed via Algorithm~\ref{alg:RPTE} with just $G=20$ grid points.
\\~\\
\begin{figure*}[h!]
	\centering
	\begin{subfigure}[t]{0.32\textwidth}
		\centering
		\includegraphics[width=5cm]{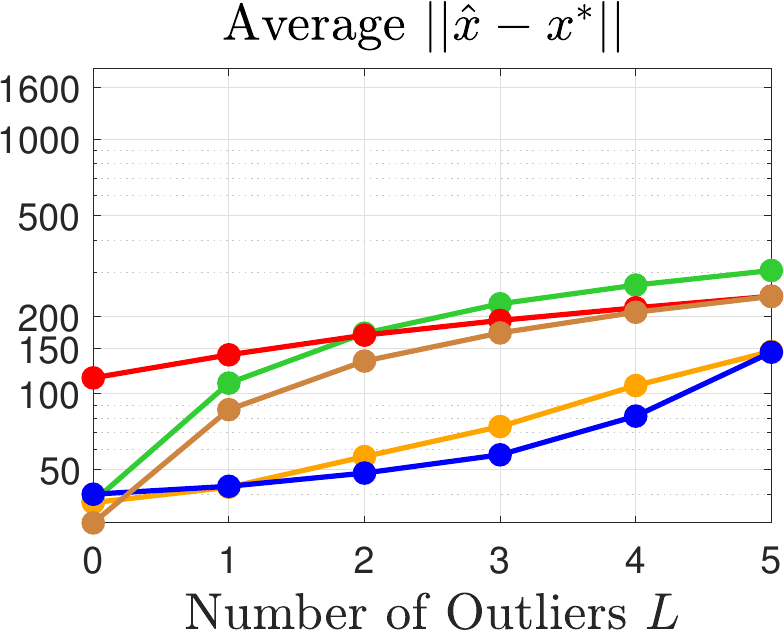}
		\caption{Outlier uncertainty $\sigma_{\mathcal{O}}=0.5$ [Km]. }
		\label{figure:result_1}
	\end{subfigure}%
	~ 
	\begin{subfigure}[t]{0.32\textwidth}
		\centering
		\includegraphics[width=5cm]{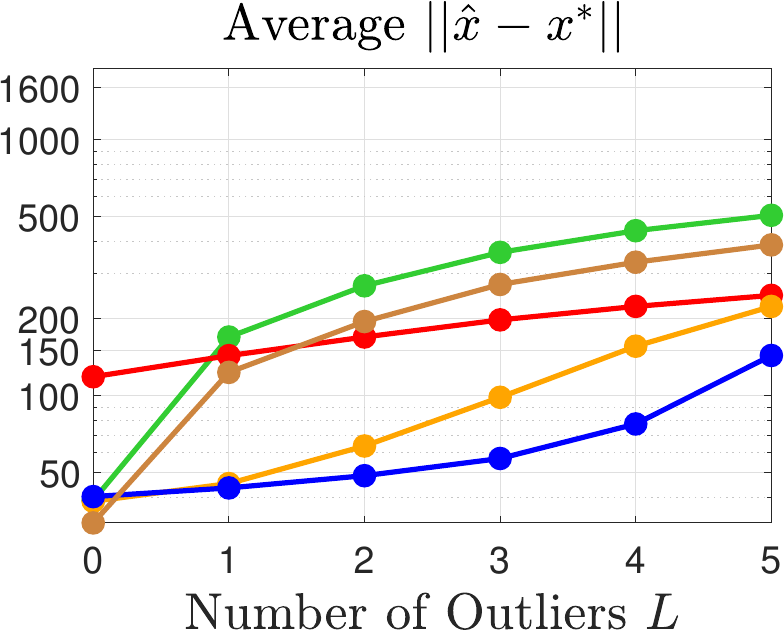}
		\caption{Outlier uncertainty $\sigma_{\mathcal{O}}=0.75$ [Km]. }
		\label{figure:result_2}
		\vspace{0.3cm}
	\end{subfigure}
	\begin{subfigure}[t]{0.32\textwidth}
		\centering
		\includegraphics[width=5cm]{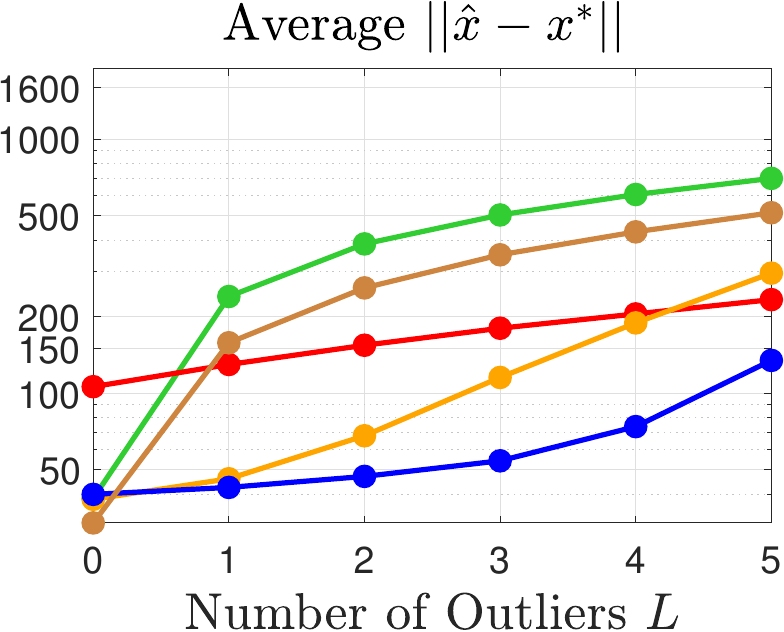}
		\caption{Outlier uncertainty $\sigma_{\mathcal{O}}=1$ [Km]. }
		\label{figure:result_3}
	\end{subfigure}
	~\\
	\begin{subfigure}[t]{1\textwidth}
		\centering
		\includegraphics[width=11cm]{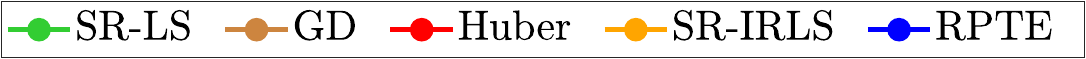}
	\end{subfigure}
	~\\
	\begin{subfigure}[t]{0.32\textwidth}
		\centering
		\includegraphics[width=5cm]{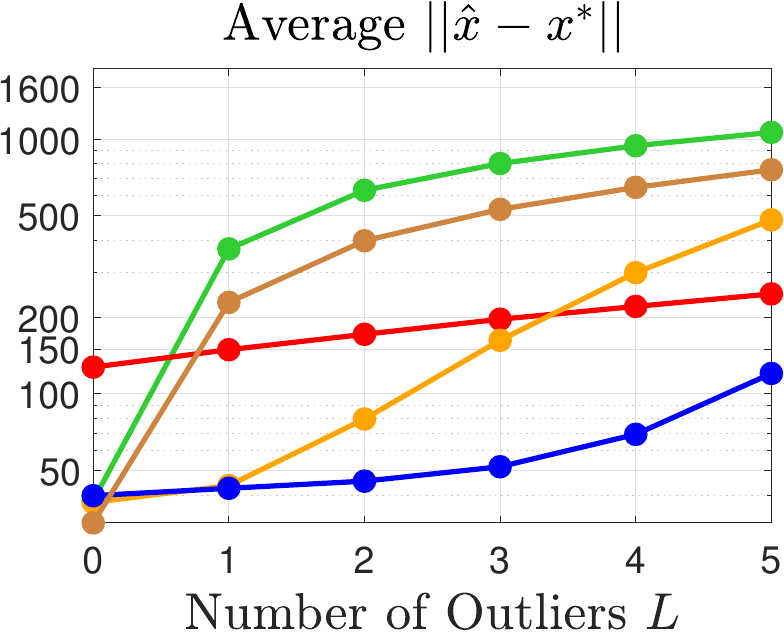}
		\caption{Outlier uncertainty $\sigma_{\mathcal{O}}=1.5$ [Km]. }
		\label{figure:result_4}
	\end{subfigure}
	\begin{subfigure}[t]{0.32\textwidth}
		\centering
		\includegraphics[width=5cm]{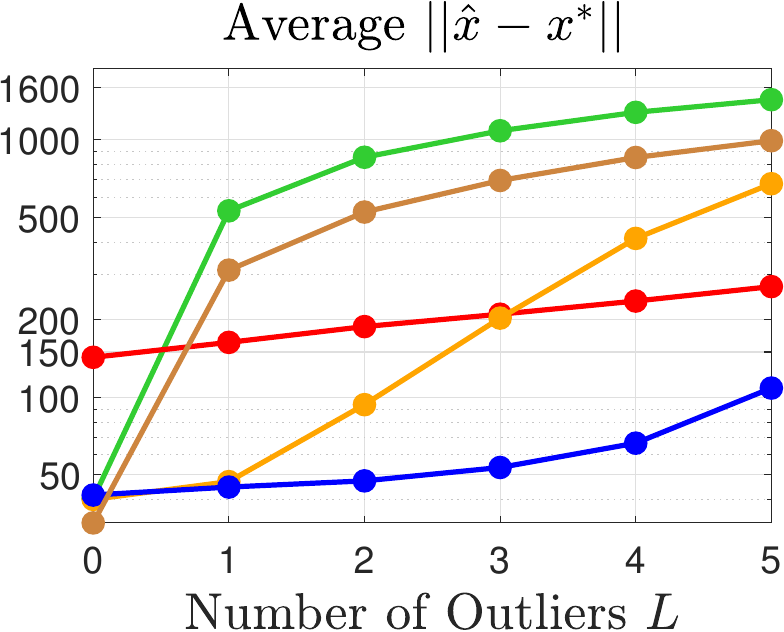}
		\caption{Outlier uncertainty $\sigma_{\mathcal{O}}=2$ [Km]. }
		\label{figure:result_5}
	\end{subfigure}
	\begin{subfigure}[t]{0.32\textwidth}
		\centering
		\includegraphics[width=5cm]{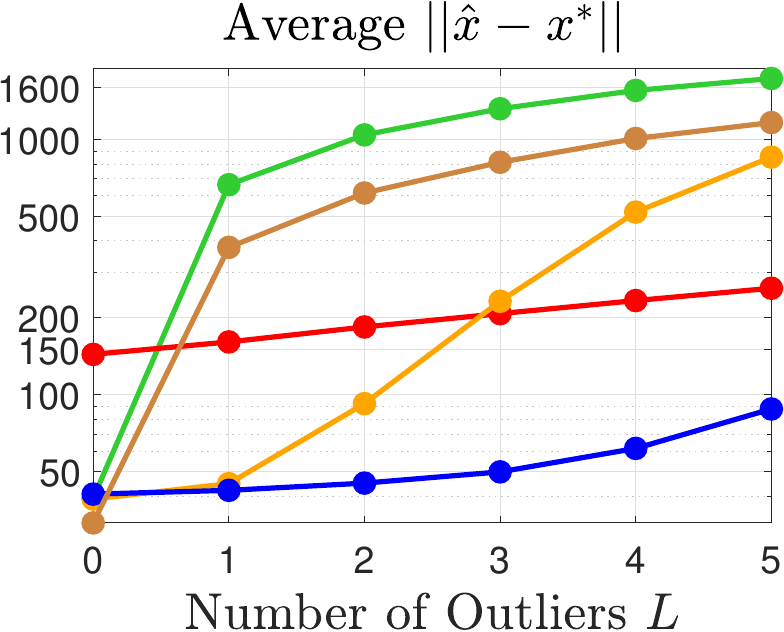}
		\caption{Outlier uncertainty $\sigma_{\mathcal{O}}=2.5$ [Km]. }
		\label{figure:result_6}
	\end{subfigure}
	\caption{ Average estimation error (in meters) for $5000$ Monte Carlo trials and increasing outlier uncertainties $\sigma_{\mathcal{O}}$.}
	\label{figure:numerical_results}
\end{figure*}
\noindent\textbf{Simulated setup.} As the localization setup, we consider a $1$ Km$^2$
square area and with $M=10$ anchors. The number of outliers, $L$ varies between $0$ (only inliers) and $M/2$ (half-outliers, half-inliers). As explained in Section~\ref{sec:introduction}, $y_m$ is an outlier if it introduces a large deviation in the model
\begin{align}
y_m=\left\|x^*-a_m\right\|+u_m,
\label{eqn:data_model_1}
\end{align}
with $x^*$ the true position of the target.
To distinguish inlier and outlier, we sample $u_m$ from a zero-mean Gaussian distribution with varying standard deviation: inlier measurements $\{y_m\}_{m\in \mathcal{I}}$ have a fixed standard deviation (uncertainty) of $50$ meters so $\sigma_{\mathcal{I}}=0.05$ Km; the uncertainty  $\sigma_{\mathcal{O}}$ of outliers varies in the kilometer grid  $\{0.5,0.75,1,1.5,2,2.5\}$.
For each value of $\sigma_{\mathcal{O}}$, we generate $5000$ problem instances as follows:
\begin{itemize}
	\item We generate $100$ positions for the anchors $\{a_m\}_{m=1}^M$ and target $x^*$  by sampling the unit square $[ 0, 1]^2$ uniformly.
	\item For each configuration of anchors $a_m$ and target $x^*$ we generate $50$ random sequences of outliers, i.e., $50$ lists \begin{align*}
		l^{(1)}&=(l_1^{(1)},\dots,l_{M/2}^{(1)}), \\
		&\vdots \\
		 l^{(50)}&=(l_1^{(50)},\dots,l_{M/2}^{(50)})\end{align*} of $M/2$ outlier indices. 
	 We generate $50$ lists instead of a single one such that, for a particular realization of anchors $a_m$ and target $x^*$, our experiments accommodate setups where the number of outliers $L$ varies (so taking the first $L$ indexes of each individual list) but also the outlier assignment varies for a fixed number of outliers (so taking the same index for different lists).  
	  For $L=0$, all measurements $y_m$ are generated by a Gaussian model 	with standard deviation  of $\sigma_{\mathcal{I}}$, that is
	\begin{align}
	y_m=|\left\|x^*-a_m\right\|+ \mathcal{N}(0,\sigma_{\mathcal{I}}^2)|.
	\label{eqn:measurement_generation_inliers}
	\end{align}
\textcolor{black}{	The absolute value in~\eqref{eqn:measurement_generation_inliers} simply insures that all measurements are non-negative.} For $L\geq 1$, we select the first $L$ indices of each list $l^{(1)},\dots,l^{(50)}$ to serve as the outliers of our model. Outlier measurements  have a standard deviation of $\sigma_{\mathcal{O}}$, that is
		\begin{align}
	y_m=|\left\|x^*-a_m\right\|+ \mathcal{N}(0,\sigma_{\mathcal{O}}^2)|.
	\label{eqn:measurement_generation_outlier}
	\end{align}
		\item For each configuration of anchors $\{a_m\}_{m=1}^M$ , true target $x^*$ and measurement vector $y$, we deploy all previously mentioned algorithms.
\end{itemize} 

\vspace*{0.2ex}
\begin{figure*}[h]
	\centering
	\begin{subfigure}[t]{0.32\textwidth}
		\centering
		\includegraphics[width=5.5cm]{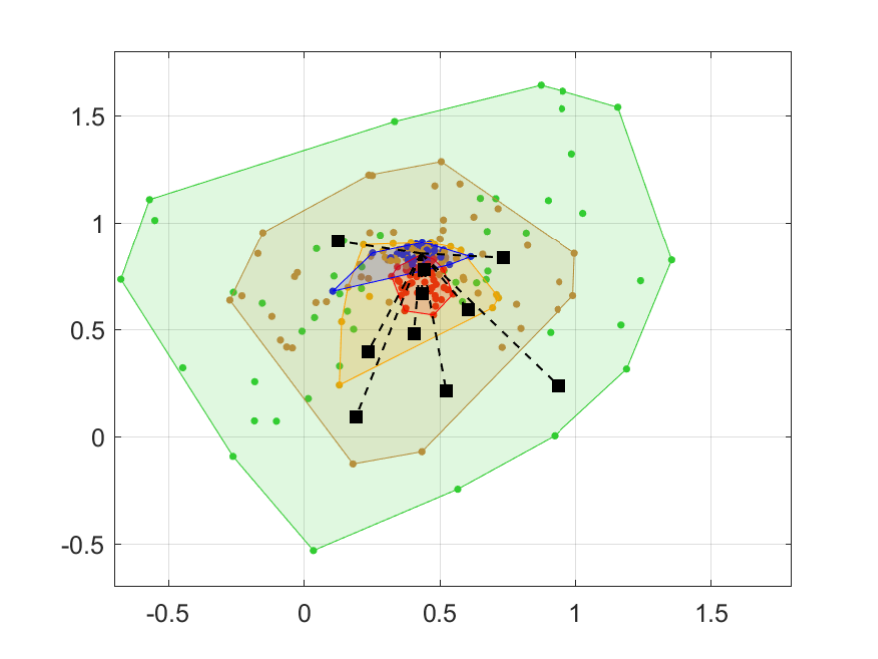}
		\caption{ $\sigma_{\mathcal{O}}=1$ [Km] and $L=3$ (zoom out). }
		\label{figure:realization_1}
	\end{subfigure}%
	~ 
	\begin{subfigure}[t]{0.32\textwidth}
		\centering
		\includegraphics[width=5.5cm]{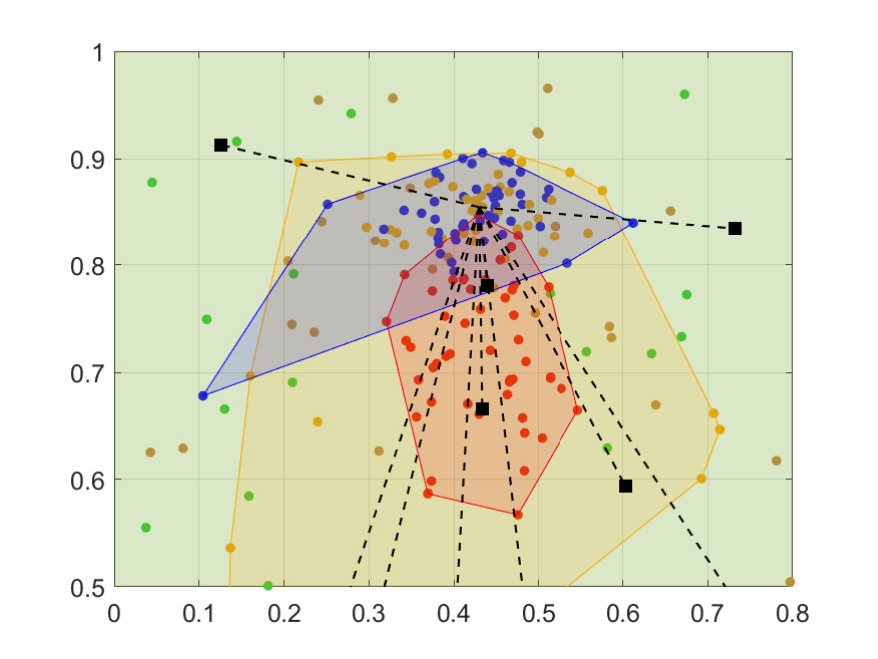}
		\caption{ $\sigma_{\mathcal{O}}=1$ [Km] and $L=3$ (zoom in). }
		\label{figure:realization_2}
		\vspace{0.3cm}s
	\end{subfigure}
	~ 
	\begin{subfigure}[t]{0.32\textwidth}
		\centering
		\includegraphics[width=5.5cm]{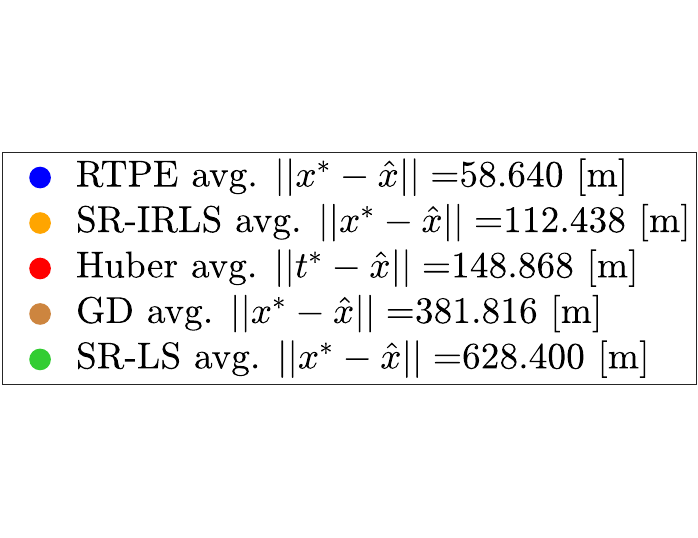}
		\caption{ $\sigma_{\mathcal{O}}=1$ [Km] and $L=3$ (errors). }
		\label{figure:realization_3}
		\vspace{-0.3cm}
	\end{subfigure}
	\begin{subfigure}[t]{0.32\textwidth}
		\centering
		\includegraphics[width=5.5cm]{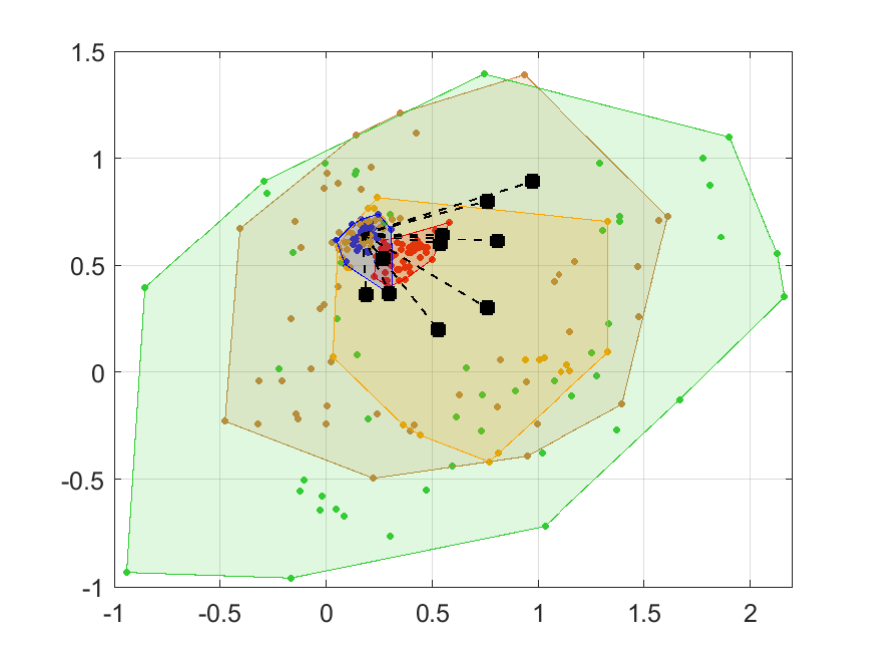}
		\caption{ $\sigma_{\mathcal{O}}=1.5$ [Km] and $L=4$ (zoom out). }
		\label{figure:realization_1_new}
	\end{subfigure}%
	~ 
	\begin{subfigure}[t]{0.32\textwidth}
		\centering
		\includegraphics[width=5.5cm]{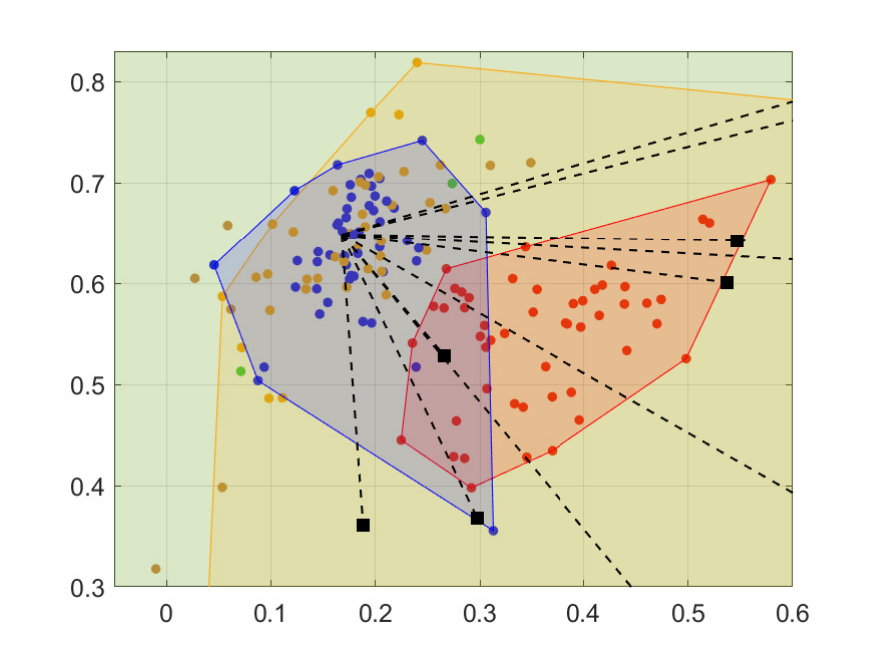}
		\caption{ $\sigma_{\mathcal{O}}=1.5$ [Km] and $L=4$ (zoom in). }
		\label{figure:realization_2_new}
		\vspace{0.3cm}
	\end{subfigure}
	~ 
	\begin{subfigure}[t]{0.32\textwidth}
		\centering
		\includegraphics[width=5.5cm]{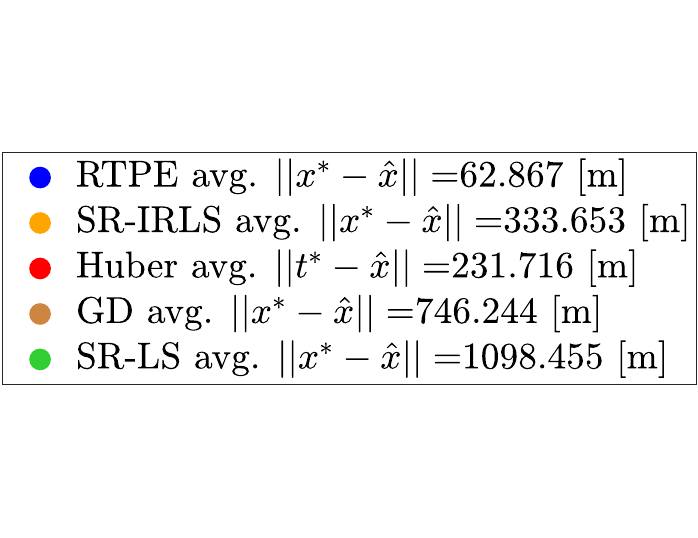}
		\caption{ $\sigma_{\mathcal{O}}=1.5$ [Km] and $L=4$ (errors). }
		\label{figure:realization_3_new}
		\vspace{0.3cm}
	\end{subfigure}
	\caption{ Estimates $\hat{x}$ for a subset of $50$ Monte Carlo trials with fixed anchors $a_m$ (black rectangles) and target $x^*$ (focal point of black dotted lines). The polyhedra represent the convex hull of estimates $\hat{x}$, using the color map of Figure~\ref{figure:numerical_results}.}
	\label{figure:epochs}
\end{figure*}
~\\
\noindent\textbf{Results.} 
Figure~\ref{figure:numerical_results} compares our percentile estimate $\hat{x}_{\text{Percentile}}$ with the benchmarks $\hat{x}_{\text{SR-LS}}$, $\hat{x}_{\text{GD}}$ , $\hat{x}_{\text{Huber}}$ and $\hat{x}_{\text{SR-IRLS}}$. For $L=0$ (no outliers) only the Huber method delivers a poor estimate $\hat{x}_{\text{Huber}}$ of target $x^*$. \textcolor{black}{In concrete, all other methods have an average reconstruction error lower than $\approx 40$ meters (in a $1$ Km$^2$ region, regardless of $\sigma_{\mathcal{I}}$) while the Huber estimate  $\hat{x}_{\text{Huber}}$ is $\approx 125$ meters away from $x^*$ (average error for different values of  $\sigma_{\mathcal{I}}$).} The upper bound of $40$ meters is met by the estimates $\hat{x}_{\text{Percentile}}$, $\hat{x}_{\text{SR-LS}}$ and  $\hat{x}_{\text{SR-IRLS}}$ while the gradient descent estimate $\hat{x}_{\text{GD}}$ tends to be $\approx 10$ meters closer to $x^*$. The higher performance of the gradient method is intuitive for two reasons: (1) \textcolor{black}{when $L=0$  the measurements $y_m$ are essentially modeled (up to the absolute in~\eqref{eqn:measurement_generation_inliers}) as i.i.d. samples of a Gaussian distribution with mean $\left\|x^*-a_m\right\|$ and standard deviation $\sigma_{\mathcal{I}}$;} (2) the gradient iterates are minimizing the range-based least squares objective (R-LS) which is proportional to the maximum likelihood objective~\cite{Becks_paper_2008}.  Pun \textit{et al.}~\cite{Target_Tracking_by_GD_2021} show that, under mild assumptions, there exists a neighbourhood where the R-LS objective is strongly convex. We suspect that the initialization proposed in~\cite{Target_Tracking_by_GD_2021} tends to be inside this neighbourhood. \textcolor{black}{In this case, Theorem 1 of~\cite{Target_Tracking_by_GD_2021} suggests that $\hat{x}_{\text{GD}}$ is an approximate maximum likelihood estimate of $x^*$, when no outlier exists.}
~\\
For $L\geq 1$ both non-robust estimates $\hat{x}_{\text{GD}}$, $\hat{x}_{\text{SR-LS}}$ exhibit a large performance decline. In concrete, a single outlier measurement ($L=1$) is capable of increasing the estimation error $ \left\|x^*-\hat{x}\right\|$ by alarming amounts. For example, when $\sigma_{\mathcal{O}}=1$ Km the estimation error $ \left\|x^*-\hat{x}\right\|$ tends to increase by a factor of $\approx 5$ (so the error $ \left\|x^*-\hat{x}_{\text{GD}}\right\| \approx 31$ m for $L=0$ becomes $ \left\|x^*-\hat{x}_{\text{GD}}\right\| \approx 158$ m for $L=1$). Furthermore the bias introduced in the estimates $\hat{x}_{\text{GD}}$, $\hat{x}_{\text{SR-LS}}$ increases with the outlier uncertainty $\sigma_{\mathcal{O}}$. For $\sigma_{\mathcal{O}}=2$ Km a single outlier now makes the estimation error $ \left\|x^*-\hat{x}_{\text{GD}}\right\|$ an order of magnitude (so $\approx 10$ times) worst. So the non-robust methods of~\cite{Becks_paper_2008},~\cite{Target_Tracking_by_GD_2021} fail to generalize in the presence of outliers.
\\~\\
For a larger number of outliers (say $L=2,3,4,5$) the most accurate estimates tends to be achieved by the proposed estimate $\hat{x}_{\text{RTPE }}$. Our method delivers significant accuracy gains with respect to the robust alternatives $\hat{x}_{\text{SR-IRLS}}$ and $\hat{x}_{\text{Huber}}$. For example when $\sigma_{\mathcal{O}}=1$ Km and $L=3$ the best robust benchmark $\hat{x}_{\text{SR-IRLS}}$ is, on average,  $115$ m away from $x^*$ while the percentile estimator $\hat{x}_{\text{RTPE}}$ doubles the estimation accuracy  ($ \left\|x^*-\hat{x}_{\text{RTPE}}\right\| \approx 54$ m) --- see the upper row of Figure~\ref{figure:epochs}. As before, higher values of $L$ and $\sigma_{\mathcal{O}}$ tend to accentuate the differences between methods.  For $\sigma_{\mathcal{O}}=1.5$ Km and $L=4$ both bencharmks yield similar estimates $\hat{x}_{\text{SR-IRLS}}$, $\hat{x}_{\text{Huber}}$  ($ \left\|x^*-\hat{x}_{\text{SR-IRLS}}\right\| \approx 300$ m,  $\left\|x^*-\hat{x}_{\text{Huber}}\right\| \approx 220$ m) yet the percentile method decreases the estimation error approximately $3$ to $4$ times  ($ \left\|x^*-\hat{x}_{\text{RTPE}}\right\| \approx 70$ m) --- see the lower row of Figure~\ref{figure:epochs}.  These findings suggest that the VaR methodology allows for robust localization in the presence of outlier measurements.
\\~\\
The higher performance of RTPE has an associated computational cost coming from the grid methodology of Algorithm~\ref{alg:RPTE}. Let us note, however, than in our experiments this cost is negligible. Indeed, on average, RTPE only takes $0.017$ seconds to compute $\hat{x}_{\text{RTPE}}$. This computational time is comparable to that of the benchmarks since  the fastest method takes $0.0009$ seconds to compute $\hat{x}_{\text{SR-LS}}$ while the  slowest approach takes $0.036$ seconds  to estimate $\hat{x}_{\text{Huber}}$. The low computational cost of RTPE comes from (a) considering a fine grid of $G=20$ in algorithm~\ref{alg:RPTE} and (b) having a reasonable number of anchors $M$  (say $M\leq 10$) which is typical in localization.
\section{Conclusion}
\label{sec:conclusion}
This paper addresses the problem of locating a target from range measurements, some of which may be outliers. Assuming that the number of outliers is known, we formulate a robust estimation problem using the percentile objective. This new formulation is given a statistical interpretation by considering the framework of risk analysis. In concrete, our formulation is equivalent to optimizing the value-at-risk (VaR) risk measure from portfolio optimization. To actually address the percentile problem we design a majorizer set which contains all minimizers of the original VaR formulation. Our majorizer can be parametrized by well known curves in plane geometry: singletons, circumferences, ellipses and bounded half-hyperbolas. Since all these regions have efficient parametrization we propose a grid algorithm -- RPTE -- which reduces the optimization problem to an efficient sampling scheme. Numerical experiments show that, on average, our method outperforms several benchmarks in target localization.

\bibliographystyle{IEEEtran}

\bibliography{IEEEabrv,M335}
\end{document}